\def\@ACM@checkaffil{
    \if@ACM@instpresent\else
    \ClassWarningNoLine{\@classname}{No institution present for an affiliation}%
    \fi
    \if@ACM@citypresent\else
    \ClassWarningNoLine{\@classname}{No city present for an affiliation}%
    \fi
    \if@ACM@countrypresent\else
        \ClassWarningNoLine{\@classname}{No country present for an affiliation}%
    \fi
}
\newtheorem{lemma}{Lemma}
\newtheorem{theorem}{Theorem}
\newtheorem{proposition}{Proposition}
\newtheorem{property}{Property}
  \providecommand\BibTeX{{%
    \normalfont B\kern-0.5em{\scshape i\kern-0.25em b}\kern-0.8em\TeX}}}
\begin{document}

\title{Accelerating Dynamic Network Embedding with Billions of Parameter Updates to Milliseconds}

\author{Haoran Deng}
\affiliation{%
  \institution{Zhejiang University}
}
\email{denghaoran@zju.edu.cn}
\orcid{0009-0008-5706-268X}

\author{Yang Yang}
\authornote{Corresponding author.}
\affiliation{%
  \institution{Zhejiang University}
}
\email{yangya@zju.edu.cn}
\orcid{0000-0002-5058-4417}

\author{Jiahe Li}
\affiliation{%
  \institution{Zhejiang University}
}
\email{jiahe.20@intl.zju.edu.cn}
\orcid{0009-0003-5092-1326}

\author{Haoyang Cai}
\affiliation{%
  \institution{Carnegie Mellon University}
}
\email{hcai2@andrew.cmu.edu}
\orcid{0009-0009-1635-6910}

\author{Shiliang Pu}
\affiliation{%
  \institution{Hikvision Research Institute}
}
\email{pushiliang.hri@hikvision.com}
\orcid{0000-0001-5269-7821}

\author{Weihao Jiang}
\affiliation{%
  \institution{Hikvision Research Institute}
}
\email{jiangweihao5@hikvision.com}
\orcid{0000-0003-3482-8538}

\renewcommand{\shortauthors}{Haoran Deng, et al.}

\begin{abstract}
  Network embedding, a graph representation learning method illustrating network topology by mapping nodes into lower-dimension vectors, is challenging to accommodate the ever-changing dynamic graphs in practice.
Existing research is mainly based on node-by-node embedding modifications, which falls into the dilemma of efficient calculation and accuracy.
Observing that the embedding dimensions are usually much smaller than the number of nodes, we break this dilemma with a novel dynamic network embedding paradigm that rotates and scales the axes of embedding space instead of a node-by-node update.
Specifically, we propose the Dynamic Adjacency Matrix Factorization (DAMF\footnote{The code is available at https://github.com/zjunet/DAMF}) algorithm, which achieves an efficient and accurate dynamic network embedding by rotating and scaling the coordinate system where the network embedding resides with no more than the number of edge modifications changes of node embeddings.
Moreover, a dynamic Personalized PageRank is applied to the obtained network embeddings to enhance node embeddings and capture higher-order neighbor information dynamically.
Experiments of node classification, link prediction, and graph reconstruction on different-sized dynamic graphs suggest that DAMF advances dynamic network embedding.
Further, we unprecedentedly expand dynamic network embedding experiments to billion-edge graphs, where DAMF updates billion-level parameters in less than 10ms.
\end{abstract}

\begin{CCSXML}
 Show the XML Only
<ccs2012>
<concept>
<concept_id>10002950.10003624.10003633.10010917</concept_id>
<concept_desc>Mathematics of computing~Graph algorithms</concept_desc>
<concept_significance>500</concept_significance>
</concept>
<concept>
<concept_id>10003752.10003809.10003635.10010038</concept_id>
<concept_desc>Theory of computation~Dynamic graph algorithms</concept_desc>
<concept_significance>500</concept_significance>
</concept>
</ccs2012>
\end{CCSXML}

\ccsdesc[500]{Mathematics of computing~Graph algorithms}
\ccsdesc[500]{Theory of computation~Dynamic graph algorithms}

\keywords{Dynamic Graphs; Network Embedding; Matrix Factorization; Graph Representation Learning}


\maketitle

\section{Introduction}


Network embedding is an advanced graph representation learning method that maps each node in a graph to a vector in a low-dimensional space while preserving the graph's topological information~\cite{Arsov2019NetworkEA}. 
It is a versatile approach with successful practical applications in a wide range of fields, such as recommendation systems and bioinformatics~\cite{cui2018survey, shi2018heterogeneous, wen2018network,Choudhary2021ASO, nelson2019embed, su2020network}.

    
In practice, however, changes in graph structure are frequent and inevitable. 
For instance, in a social network, the nodes representing the new members are added in connection to the original network over time, forming new edges. 
On such dynamic graphs, the network embeddings should be updated with the transformation of the graphs to empower the model to capture crucial insights, such as which groups are more active or which members are more likely to be influencers. Moreover, the temporal evolution contributes to the portrait of the new members.
Unfortunately, the frequent evolving patterns and the extensive network scale require enormous time and space to retrain the network embedding to model the dynamic network effectively.





Recent research has encountered a dilemma in efficiency and effectiveness. 
That is, precisely modifying the network embeddings leads to excessive inefficiencies.
Some methods~\cite{chen2015fast, zhang2018timers, zhu2018high} choose to adjust the embedding of nearly every node (also known as global updates), resulting in high time complexity.
On the other hand, 
some works~\cite{liu2019real, du2018dynamic, hou2020glodyne, mahdavi2018dynnode2vec} make trade-offs by updating only the more affected nodes in the graph (also known as local updates) to ensure good efficiency, but constantly bringing errors and, consequently, leading to performance degradation.
Moreover, existing applications like short video instant recommendations require high instantaneity, which indicates
it is inappropriate to use delayed retraining methods, like gathering multiple modifications for a single update.


We break the dilemma by rotating and scaling the coordinate axes of the embedding space instead of updating individually from the nodes' perspective.
By calculating the space projection matrices, the newly added edge is captured while retaining the semantics of the embedding from the previous step. 
By means of an embedding space projection matrix and a small number of modifications of the node embedding, each node vector embedding in the graph will be relocated at the updated position in the embedding space.
This approach is efficient since the number of coordinate axes (i.e., the dimension of the embedding) is significantly fewer than the number of nodes while retaining a high accuracy level compared to local updates.


In light of the preceding ideas, we propose the Dynamic Adjacency Matrix Factorization (DAMF) algorithm.
The corresponding space projection matrix is solved based on \emph{Zha-Simon's t-SVD update formula}~\cite{zha1999updating}, with additional modification at most $\Delta m$ nodes' embedding, where $\Delta m$ is the number of edge changes in the graph. 
Further, inspired by the great success of the application of Personalized PageRank (PPR) to static network embeddings~\cite{yang2020nrp, tsitsulin2018verse, yin2019strap} and graph neural networks~\cite{gasteiger2018predict}, we use a dynamic PPR to enhance the network embedding in order to capture high-order neighbors' information.

With the above design and ideas, DAMF also provides theoretical guarantees of effectiveness and efficiency. For effectiveness specifically, let $\widetilde{\mathbf{A}}$ be the low-rank approximation of the adjacency matrix represented by current network embedding, the unenhanced DAMF achieves the matrix factorization of the updated $\widetilde{\mathbf{A}}$ with node or edge change (Theorem \ref{theorem:node} \& \ref{theorem:edge} ). Moreover, the dynamic embedding enhancement converges into accurate PPR propagation. For efficiency, the time complexity of the DAMF algorithm is proved to be $O(\Delta m)$ when hyperparameters are considered constants.



In addition, we are the first to extend dynamic network embedding experiments to billion-edge graphs, which is a breakthrough in the scale of massive graphs.
We conduct our experiment on Twitter, a real-world graph dataset with 41 million nodes and \textbf{1.4 billion} edges, and map each node to a $128$-dimensional vector with the number of learnable parameters more than 10 times that the BERT-Large's~\cite{kenton2019bert}.
We add nodes to the graph and updated the network embeddings individually, starting from 1000 nodes. The total updating time of DAMF is 110 hours, illustrating that DAMF achieves billion-level parameter updates in less than 10ms.
In addition, we have conducted experiments on node classification, link prediction, and graph reconstruction on graphs of different sizes. The experimental results show that DAMF reaches a new state of the art in dynamic network embedding.

To summarize, we make the following contributions:
\begin{itemize}[leftmargin=*, topsep=2pt, itemsep=1pt]
\item We present the DAMF algorithm, a novel dynamic network embedding method based on embedding space projections, and augment the embedding with dynamic PPR to capture higher-order neighbourhood information.

\item We theoretically illustrate the efficiency and effectiveness of DAMF.

\item To the best of our knowledge, we are the first to extend dynamic network embedding to a dataset with billions of edges. Experimental results show that our proposed methods use only an average of $10ms$ to complete global parameter updates. We also conduct experiments on five other actual network data, and the results show that the proposed methods reach a new state of the art in dynamic network embedding.
\end{itemize}

\section{preliminaries}
\begin{table}[htbp]
\newcommand{\tabincell}[2]{\begin{tabular}{@{}#1@{}}#2\end{tabular}}
    \caption{Notations}
    \label{tab:notaions}
    \begin{tabular}{p{1.1cm} | p{6.5cm}}
        \toprule
        Notation & Description\\
        \midrule
        $\mathcal{G}(\mathcal{V},\mathcal{E})$ & the graph with node set $\mathcal{V}$ and edge set $\mathcal{E}$ \\
        $n, m$ & the number of nodes and the number of edges \\
        $\Delta m$ & the number of edges change in graph \\
        $deg(u)$ & the degree (in-degree or out-degree) of node $u$ \\
        $\mathbf{A}$, $\mathbf{D}$  & the adjacency matrix and degree matrix \\
        $\mathbf{B}$, $\mathbf{C}$ & the low-rank representation of the updated matrix. \\
        $\mathbf{X}$, $\mathbf{Y}$ & the context embedding and content embedding \\
        $\mathbf{Z}$ & the enhanced context embedding \\
        $\mathbf{F}, \mathbf{G}$ & the space projection matrix for $\mathbf{X}$ and $\mathbf{Y}$ \\
        $\Delta \mathbf{X}, \Delta \mathbf{Y}$ & the node vectors that are directly modified \\
        $d$ & the dimension of the embedding space \\
        $\alpha$ & the damping factor in \textit{Personalized PageRank} \\
        $\epsilon$ & the error tolerance in \textit{Personalized PageRank} \\
        \bottomrule
\end{tabular}
\end{table}

Consider a graph $\mathcal{G}=(\mathcal{V}, \mathcal{E})$, where $\mathcal{V}$ denotes the node set of $n$ nodes and $\mathcal{E}$ denotes the edge set of $m$ edges. 
\emph{Dynamic graph scenario} is a sequential update events $\{Event_1, ..., Event_T \}$ with an initial graph $\mathcal{G}_0$. Each event is one of either node change or edge change.
Let $\mathbf{A} \in \mathbb{R}^{n\times n}$ be the adjacency matrix of $\mathcal{G}$, and $\mathbf{D}=diag\{deg[1],
deg[2], ... , deg[n]\}$ be the diagonal degree matrix where $\deg[i]=\sum_j \mathbf{A}[i, j]$ is the out-degree of $i$-th node. 
Network embedding aims at mapping each node in graph $\mathcal{G}$ to one or two low-dimensional vectors, which capture each node's structural information in the graph. In this paper, for a given dimension $d\ll n$, the $i$-th node in graph $\mathcal{G}$ is mapped to two vectors $\mathbf{X}[i], \mathbf{Y}[i] \in \mathbb{R}^{\frac{d}{2}}$ with equal dimension, which capture the structural information. Dynamic Network Embedding updates the result of the embedding for the $t$-th event, and the snapshot of the updated embedding is noted as $\mathbf{X}_t$, $\mathbf{Y}_t$.

In this paper, matrices are denoted in bold uppercase letters. Let $\mathbf{M}$ be arbitrary matrix, $\mathbf{M}[i]$ is the $i$-th row vector of $\mathbf{M}$, $\mathbf{M}[:,j]$ is the $j$-th column vector of $\mathbf{M}$, and $\mathbf{M}[i,j]$ is the element on the $i$-th row $j$-th column of $\mathbf{M}$. In addition, we use $\mathbf{M}[:l]$ to denote the submatrix consisting of the first $l$ rows of $\mathbf{M}$, and use $\mathbf{M}[l:]$ to denote the submatrix consisting of the remaining part of $\mathbf{M}$.

\section{Related Work}
\subsection{Network Embedding}
Network embedding aims to reflect the structural information of nodes in a graph by mapping each node into a low-dimensional vector~\cite{hamilton2017representation}.
Compared to another popular graph learning method, Graph Neural Networks (GNNs), it has no requirement for any features or attributes of nodes.

Studies on network embedding can simply be classified into gradient-based methods and matrix factorization-based methods.
Gradient-based methods like DeepWalk~\cite{perozzi2014deepwalk}, node2vec~\cite{grover2016node2vec}, and LINE~\cite{tang2015line} learn a skip-gram with negative sampling from random-walk on the graph, while GraphGAN ~\cite{wang2018graphgan}, GA~\cite{abu2018watch}, DNGR~\cite{cao2016deep} use deep learning method to learn network embedding.
For matrix factorization-based methods, AROPE~\cite{zhang2018arbitrary}, NetMF~\cite{qiu2018netmf}, NetSMF~\cite{qiu2019netsmf},  and LightNE~\cite{qiu2021lightne} construct a matrix that reflects the properties of the graph and factorizes it to obtain the network embedding, while ProNE~\cite{zhang2019prone} and NRP ~\cite{yang2020nrp} obtain network embedding by propagates the embedding on the graph after the factorization.

Among the network embedding methods, those who use PageRank~\cite{page1999pagerank}, such as VERSE~\cite{tsitsulin2018verse}, STRAP~\cite{yin2019scalable}, and NRP~\cite{yang2020nrp}, have achieved an ideal result since they better capture the information of high-order neighbors.

\subsection{Dynamic Network Embedding}
Numerous graphs in the industry are dynamic with frequent node or edge modifications, which leads to the emergence of network embedding methods on dynamic graphs. Methods for dynamic graph embedding can be categorized as node-selection-based, matrix-factorization-based, and others.

\textbf{Node-selection-based methods.}
Node-selection-based methods choose to update only a limited number of embeddings node-by-node, resulting in poor performance.
DNE~\cite{du2018dynamic} updates the embedding of nodes by adapting the skip-gram to the dynamic graph;
Dynnode2vec~\cite{mahdavi2018dynnode2vec} fine-tunes the previous result with the newly sampled data;
GloDyNE~\cite{hou2020glodyne} improves the node selection strategy to guarantee a global topological relationship;
LocalAction~\cite{liu2019real} achieves an efficient and dynamic network but with poor performance by randomly selecting a small number of neighbors around the updated nodes and modifying their embedding.

\textbf{Matrix-Factorization-based methods.}
Matrix-factorization-based methods prefer global updates that adjust the embedding of almost every node, which leads to high time complexity.
TRIP~\cite{chen2015fast} efficiently updates the top eigen-pairs of the graph, but leads to a significant accumulation of errors; TIMERS~\cite{zhang2018timers} uses recalculation to mitigate the errors caused by TRIP;
RandNE~\cite{zhang2018billion} uses a random projection method to update the factorization of the adjacency matrix; DHEP ~\cite{zhu2018high} modifies the most affected eigenvectors using the matrix perturbation theory. Unfortunately, regardless of the amount of changes, the time complexity of these methods except RandNE for a graph with $n$ nodes is at least $O(n)$.

\textbf{Other methods.} DynGEM ~\cite{goyal2018dyngem} and NetWalk~\cite{yu2018netwalk} use an auto-encoder to continuously train the model with parameters inherited from the last time step with a regularization term. 
DepthLGP~\cite{ma2018depthlgp} considers adding nodes to the dynamic graph as if they were out-of-sample and interpolating their embedding. However, the above methods are difficult to adapt to frequently changed graphs or cold start scenarios.

As a distinction, the dynamic network embedding in this paper tries to adapt embedding updates to dynamically changing graphs instead of mining the graph for temporal information(e.g., DynamicTraid~\cite{zhou2018dynamictrad}, CTDNE~\cite{nguyen2018ctdne}, DTINE~\cite{gong2020ctine}, tNE\cite{singer2019tne}), and the nodes have no attributes or features(e.g., EvolveGCN~\cite{pareja2020evolvegcn}, DANE~\cite{li2017dane}).

\section{Methodology}

In this section, we develop the Dynamic Adjacency Matrix Factorization (DAMF) algorithm.
Figure \ref{fig:DAMF} shows an overview of the proposed DAMF algorithm.
In what follows, Section \ref{sec:project} introduces the concept of dynamic network embeddings based on space projections.
Section \ref{sec:NodeChange} and Section \ref{sec:EdgeChange} demonstrate how to modify the space projection matrix and the node embedding for node and edge changes, respectively. Section \ref{sec:pagerank} presents an enhanced approach for dynamic network embedding using dynamic Personalized PageRank. Section \ref{sec:damf} gives the detailed steps of the DAMF. Section \ref{sec:complexity} analyzes the time and space complexity of DAMF. 

\begin{figure*}
    \centering
    \includegraphics[width=6in]{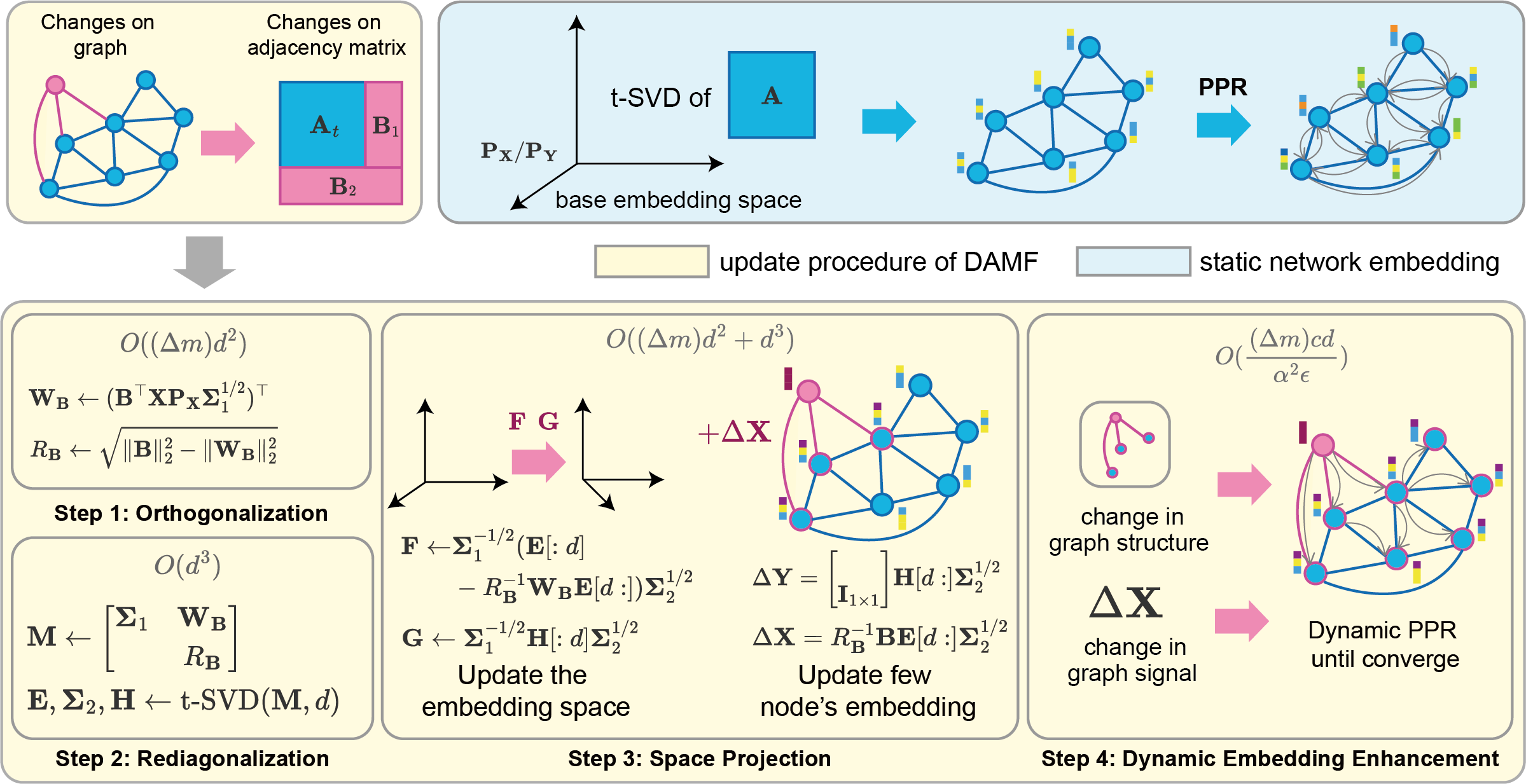}
    \caption{Overview of DAMF}
    \label{fig:DAMF}
\end{figure*}

\subsection{Dynamic Embedding via Space Projection}
\label{sec:project}
The truncated singular value decomposition(t-SVD) of the adjacency matrix of a graph provides an ideal network embedding. For a graph $\mathcal{G}$ with adjacency matrix $\mathbf{A}\in \mathbb{R}^{n \times n}$, this method provides a network embedding $\mathbf{X} ,\mathbf{Y} \in \mathbb{R}^{n \times d}$ with dimension $d$ by
\begin{equation}
\label{eq:ne}
    \mathbf{U},\mathbf{\Sigma}, \mathbf{V}\gets \texttt{\textbf{t-SVD}}(\mathbf{A},d), \quad
    \mathbf{X}\gets \mathbf{U}\sqrt{\mathbf{\Sigma}}, \quad
    \mathbf{Y}\gets \mathbf{V}\sqrt{\mathbf{\Sigma}}, \quad
\end{equation}
In the scenario of a dynamic graph, the adjacency matrix changes over time, causing its t-SVD to change as well. 
However, the instantaneous recalculation of t-SVD is time-consuming, especially for large-scale adjacency matrices.

To cope with this problem, we propose a novel paradigm for updating network embedding by rotating and scaling (i.e., space projection) the space coordinate axes of embedding, with only a small number of nodes to update additionally.
Specifically, we take the network embedding $\mathbf{X}, \mathbf{Y}$ at time $t-1$, rotate and scale its coordinate axis by the space projection matrices $\mathbf{F}, \mathbf{G} \in \mathbb{R}^{d\times d}$, then add $\Delta X$ and $\Delta Y$ respectively to get the updated network embeddings $\mathbf{X}_t$ and $\mathbf{Y}_t$ at time $t$ in the new coordinate system by
\begin{equation}
\label{eq:raw}
    \mathbf{X}_t \gets  \mathbf{X}_{t-1} \cdot \mathbf{F} + \Delta \mathbf{X}_{t}, \quad
    \mathbf{Y}_t \gets  \mathbf{Y}_{t-1} \cdot \mathbf{G} + \Delta \mathbf{Y}_t
\end{equation}
where the \textbf{number of non-zero rows} in $\Delta \mathbf{X}_{t}$ and $\Delta \mathbf{Y}_{t}$ is the number of nodes that need to be modified for embedding.

Nevertheless, the complexity of computing $\mathbf{X}_{t-1} \cdot \mathbf{F}$ and $\mathbf{Y}_{t-1} \cdot \mathbf{G}$ is very high. 
To address this issue, we map all graph modifications to a base space that stores network embeddings at any given timestamp. By matrix multiplication, successive space projections can be merged. To improve efficiency, we apply a “lazy” query optimization that keeps the accumulated modifications until a  query of new embedding appears.

Specifically, $\mathbf{X}_0$ and $\mathbf{Y}_0$ are the initialized network embeddings of $\mathbf{X_b}, \mathbf{Y_b} \in \mathbb{R}^{n \times d}$ at timestamp $0$, while the space projection matrix $\mathbf{P_X}, \mathbf{P_Y} \in \mathbb{R}^{d \times d}$ are initialized with the identity matrix. The operation in Eq. (\ref{eq:raw}) can be transformed into
\begin{equation}
\label{eq:update}
\begin{aligned}
    \mathbf{P}_{\mathbf{X},t} \gets \mathbf{P}_{\mathbf{X}, t-1} \cdot \mathbf{F}, \quad
    \mathbf{X}_{b,t} \gets \mathbf{X}_{b, t-1} + \Delta \mathbf{X}_{t} \cdot \mathbf{P}_{\mathbf{X}, t}^{-1} \\
    \mathbf{P}_{\mathbf{Y},t} \gets \mathbf{P}_{\mathbf{Y}, t-1} \cdot \mathbf{G}, \quad
    \mathbf{Y}_{b,t} \gets \mathbf{Y}_{b, t-1} + \Delta \mathbf{Y}_{t} \cdot \mathbf{P}_{\mathbf{Y}, t}^{-1}
\end{aligned}
\end{equation}

Then, the space projection matrix and node embedding will be modified for each change to the graph.
\subsection{Node Change}
\label{sec:NodeChange}

Adding nodes to a graph is equivalent to adding an equal number of rows and columns to its adjacency matrix.
Without losing generality, we consider the newly added node is in the last column and row of adjacency matrix, and we treat the column and row addition as two individual steps, that is, 

\begin{equation}
\label{eq:node_B}
    \mathbf{A}^\prime \gets [\mathbf{A}_{t-1}^{}, \mathbf{B}_1^{}], \quad
    \mathbf{A}_t^{} \gets [\mathbf{A}^{\prime \top}, \mathbf{B}_2^{}]^\top
\end{equation}

Because adding a row and adding a column are symmetrical, we describe our method in terms of adding a column to reduce redundant statements, and we use $\mathbf{B}$ to denote $\mathbf{B}_1$ and $\mathbf{B}_2$ above. 
The procedure for adding a row is the same, with all matrices are simply transposed.

In DAMF, without loss of generality, we consider the single node or edge update, that is, $\mathbf{B}$ (and $\mathbf{B}$, $\mathbf{C}$ in the Section \ref{sec:EdgeChange}) is $n$-by-$1$ matrix (or vector)

Due to the connectivity of graphs, adding nodes is often accompanied by adding edges. 
Let $\Delta m$ be the number of edges added to the graph. Since $\mathbf{B}$ is a part of the adjacency matrix, we give the following property without proof.

\begin{property}
$\mathbf{B}$ has at most $\Delta m$ non-zero elements.
\end{property}

For the matrix's expanded part $\mathbf{B}$, we use a space projection idea on the embedding space with a small number of modifications for node embeddings to fit the updated graph.

\begin{lemma}
\label{lemma:nzr}
For arbitrary matrices $\mathbf{B} \in \mathbb{R}^{n \times q}, \mathbf{C} \in \mathbb{R}^{q \times p}$, if $\mathbf{B}$ has $t$ non-zero rows, then $\mathbf{B}\mathbf{C}$ has at most $t$ non-zero rows.
\end{lemma}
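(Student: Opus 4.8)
The plan is to establish the statement directly from the definition of matrix multiplication by tracking which rows of the product $\mathbf{B}\mathbf{C}$ can possibly be non-zero. The key observation is that the $i$-th row of the product, $(\mathbf{B}\mathbf{C})[i]$, depends only on the $i$-th row of $\mathbf{B}$, since $(\mathbf{B}\mathbf{C})[i] = \mathbf{B}[i] \cdot \mathbf{C}$ by the row-times-matrix rule. This localizes the effect: no row of $\mathbf{C}$ can introduce a dependence of $(\mathbf{B}\mathbf{C})[i]$ on rows of $\mathbf{B}$ other than the $i$-th.

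First I would fix an index $i$ corresponding to a \emph{zero} row of $\mathbf{B}$, i.e.\ $\mathbf{B}[i] = \mathbf{0} \in \mathbb{R}^{q}$. Then I would compute the $i$-th row of the product entrywise: for each column $j$, the entry is $(\mathbf{B}\mathbf{C})[i,j] = \sum_{k=1}^{q} \mathbf{B}[i,k]\,\mathbf{C}[k,j] = \sum_{k=1}^{q} 0 \cdot \mathbf{C}[k,j] = 0$. Since this holds for every $j$, the entire row $(\mathbf{B}\mathbf{C})[i]$ is the zero vector. This shows the contrapositive of what we want: every zero row of $\mathbf{B}$ forces a zero row in $\mathbf{B}\mathbf{C}$.

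Taking the contrapositive back, any non-zero row of $\mathbf{B}\mathbf{C}$ must occur at an index $i$ where $\mathbf{B}[i] \neq \mathbf{0}$, i.e.\ at a non-zero row of $\mathbf{B}$. Hence the set of non-zero-row indices of $\mathbf{B}\mathbf{C}$ is a subset of the set of non-zero-row indices of $\mathbf{B}$, so its cardinality is at most $t$. The bound is ``at most'' rather than ``exactly'' because a non-zero row of $\mathbf{B}$ may still map to a zero row under right-multiplication by $\mathbf{C}$ (for instance if $\mathbf{B}[i]$ lies in the left null space of $\mathbf{C}$), which is precisely why the inequality cannot be tightened to an equality in general.

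There is essentially no hard part here; the statement is a routine consequence of the row-localized structure of matrix multiplication. The only thing to be careful about is keeping the direction of the implication straight—arguing from zero rows of $\mathbf{B}$ to zero rows of $\mathbf{B}\mathbf{C}$ and then contraposing—rather than attempting a direct counting argument on the non-zero rows, which would be more awkward to write cleanly.
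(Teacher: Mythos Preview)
Your proposal is correct and follows essentially the same approach as the paper: both argue via the row-times-matrix identity $(\mathbf{B}\mathbf{C})[i] = \mathbf{B}[i]\,\mathbf{C}$ that a zero row of $\mathbf{B}$ forces a zero row of $\mathbf{B}\mathbf{C}$, and then count. Your version is slightly more explicit (entrywise computation, contrapositive phrasing, remark on why the bound is only ``at most''), but the underlying argument is identical.
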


\begin{theorem}[Space Projection for Node Change]
\label{theorem:node}
Assuming $\mathbf{B}$ is a matrix $\in \mathbb{R}^{n_1 \times 1}$ with at most $\Delta m$ non-zero elements. 
Let $\mathbf{X}_1 \in \mathbb{R}^{n_1 \times d}, \mathbf{Y}_1 \in \mathbb{R}^{n_2 \times d}$ be arbitrary network embedding  with 
\begin{equation}
\label{eq:svd1}
    \mathbf{X}_1^{}   \mathbf{Y}_1^\top = \mathbf{U}_1 \boldsymbol{\Sigma}_1 \mathbf{V}_1^\top = \widetilde{\mathbf{A}},
\end{equation}
where $\widetilde{\mathbf{A}} \in \mathbb{R}^{n_1 \times n_2}$, 

then there exists space projection matrices $\mathbf{F} \in \mathbb{R}^{d\times d}, \mathbf{G} \in \mathbb{R}^{d \times d}$, and embedding modification matrices $\Delta \mathbf{X}\in \mathbb{R}^{n_1 \times d}$, $\Delta \mathbf{Y} \in \mathbb{R}^{(n_2+1) \times d}$ with at most $\Delta m$ non-zero rows , such that
\begin{equation}
    \mathbf{X}_2 = \mathbf{X}_{1}   \mathbf{F} + \Delta \mathbf{X}, \quad
    \mathbf{Y}_2 = \mathbf{Y}_{1}   \mathbf{G} + \Delta \mathbf{Y},
\end{equation} where $\mathbf{X}_2 \in \mathbb{R}^{n_1 \times d}, \mathbf{Y}_2 \in \mathbb{R}^{(n_2+1) \times d}$ is a network embedding from a rank-$d$ t-SVD of $[ \widetilde{\mathbf{A}}, \mathbf{B} ]$, i.e.,
\begin{equation}
\label{eq:node_target}
    (\mathbf{U}_2, \boldsymbol{\Sigma}_2, \mathbf{V}_2) \gets \texttt{\textbf{t-SVD}}([ \widetilde{\mathbf{A}}, \mathbf{B} ], d)
\end{equation}
\begin{equation}
\label{eq:svd2}
     \mathbf{X}_2 = \mathbf{U}_2   \sqrt{\boldsymbol{\Sigma}_2}, \quad \mathbf{Y}_2 = \mathbf{V}_2   \sqrt{\boldsymbol{\Sigma}_2}
\end{equation}
\end{theorem}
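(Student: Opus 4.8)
The plan is to recognize the statement as a rank-one \emph{column} augmentation of a truncated SVD and to adapt \emph{Zha--Simon's} update formula, with the extra twist that the correction $\Delta\mathbf{X}$ must be made row-sparse. Throughout I take the canonical embedding $\mathbf{X}_1=\mathbf{U}_1\sqrt{\boldsymbol{\Sigma}_1}$, $\mathbf{Y}_1=\mathbf{V}_1\sqrt{\boldsymbol{\Sigma}_1}$ implied by Eq.~(\ref{eq:svd1}) and Eq.~(\ref{eq:ne}), and assume $\boldsymbol{\Sigma}_1$ has positive diagonal (the rank-$d$ assumption on $\widetilde{\mathbf{A}}$), so that $\mathbf{U}_1=\mathbf{X}_1\boldsymbol{\Sigma}_1^{-1/2}$ and $\mathbf{V}_1=\mathbf{Y}_1\boldsymbol{\Sigma}_1^{-1/2}$ are invertibly recoverable. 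First I would split the new column by its projection onto the current left singular space: set $\mathbf{b}=\mathbf{U}_1^\top\mathbf{B}$, $\mathbf{r}=(\mathbf{I}-\mathbf{U}_1\mathbf{U}_1^\top)\mathbf{B}$, $\rho=\lVert\mathbf{r}\rVert$ and $\mathbf{p}=\mathbf{r}/\rho$, so that $\mathbf{B}=\mathbf{U}_1\mathbf{b}+\rho\mathbf{p}$ with $\mathbf{p}$ a unit vector orthogonal to the columns of $\mathbf{U}_1$.

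With this decomposition the augmented matrix factors exactly as
\[
[\widetilde{\mathbf{A}},\mathbf{B}]=\bigl[\mathbf{U}_1,\ \mathbf{p}\bigr]\,\mathbf{K}\,
\begin{bmatrix}\mathbf{V}_1 & \mathbf{0}\\ \mathbf{0} & 1\end{bmatrix}^{\!\top},
\qquad
\mathbf{K}=\begin{bmatrix}\boldsymbol{\Sigma}_1 & \mathbf{b}\\ \mathbf{0} & \rho\end{bmatrix}\in\mathbb{R}^{(d+1)\times(d+1)},
\]
which I would verify by direct block multiplication. Because $[\mathbf{U}_1,\mathbf{p}]$ and $\mathrm{diag}(\mathbf{V}_1,1)$ both have orthonormal columns, taking the ordinary SVD $\mathbf{K}=\mathbf{U}_K\boldsymbol{\Sigma}_K\mathbf{V}_K^\top$ yields a genuine SVD of the rank-$(d{+}1)$ matrix $[\widetilde{\mathbf{A}},\mathbf{B}]$, namely $([\mathbf{U}_1,\mathbf{p}]\mathbf{U}_K)\,\boldsymbol{\Sigma}_K\,(\mathrm{diag}(\mathbf{V}_1,1)\mathbf{V}_K)^\top$. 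Keeping the $d$ largest singular triples produces the $\mathbf{U}_2,\boldsymbol{\Sigma}_2,\mathbf{V}_2$ of Eq.~(\ref{eq:node_target}), hence $\mathbf{X}_2,\mathbf{Y}_2$ via Eq.~(\ref{eq:svd2}).

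The $\mathbf{Y}$-side is then immediate: writing the top $d$ rows $\mathbf{V}_K^{\mathrm{top}}$ and the bottom row $\mathbf{v}^\top$ of the truncated $\mathbf{V}_K$ separately, $\mathbf{V}_2$ has first $n_2$ rows equal to $\mathbf{V}_1\mathbf{V}_K^{\mathrm{top}}$ and a single extra bottom row, so with $\mathbf{G}=\boldsymbol{\Sigma}_1^{-1/2}\mathbf{V}_K^{\mathrm{top}}\sqrt{\boldsymbol{\Sigma}_2}$ the product $\mathbf{Y}_1\mathbf{G}$ accounts for all old nodes and $\Delta\mathbf{Y}$ carries only the one new row, trivially at most $\Delta m$ non-zero rows. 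The genuine obstacle is the $\mathbf{X}$-side. Splitting the truncated $\mathbf{U}_K$ into a top block $\mathbf{W}$ and bottom row $\mathbf{w}^\top$ gives $\mathbf{U}_2=\mathbf{U}_1\mathbf{W}+\mathbf{p}\mathbf{w}^\top$, and the term $\mathbf{p}\mathbf{w}^\top\sqrt{\boldsymbol{\Sigma}_2}$ is \emph{dense}, since $\mathbf{p}$ has no reason to be sparse; a naive $\Delta\mathbf{X}$ would perturb every node and defeat the claim. The key step is to eliminate $\mathbf{p}$ through $\rho\mathbf{p}=\mathbf{B}-\mathbf{U}_1\mathbf{b}$: the $\mathbf{U}_1\mathbf{b}$ piece lies in $\mathrm{col}(\mathbf{U}_1)=\mathrm{col}(\mathbf{X}_1)$ and can be reabsorbed into the projection matrix, leaving
\[
\mathbf{F}=\boldsymbol{\Sigma}_1^{-1/2}\Bigl(\mathbf{W}-\tfrac{1}{\rho}\mathbf{b}\mathbf{w}^\top\Bigr)\sqrt{\boldsymbol{\Sigma}_2},
\qquad
\Delta\mathbf{X}=\tfrac{1}{\rho}\,\mathbf{B}\,\mathbf{w}^\top\sqrt{\boldsymbol{\Sigma}_2}.
\]
Since $\Delta\mathbf{X}$ is now $\mathbf{B}$ times a $1\times d$ row vector, Lemma~\ref{lemma:nzr} with Property~1 bounds its non-zero rows by the non-zero entries of $\mathbf{B}$, i.e.\ by $\Delta m$.

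Finally I would dispatch the degenerate cases. When $\rho=0$ the column $\mathbf{B}$ already lies in $\mathrm{col}(\mathbf{U}_1)$, so $\mathbf{p}$ and the last row/column of $\mathbf{K}$ drop out, $\Delta\mathbf{X}=\mathbf{0}$, and the argument collapses to an ordinary $d\times d$ rotation-and-scaling; the positivity of $\boldsymbol{\Sigma}_1$ needed for the inverses is precisely the rank-$d$ hypothesis. The symmetric row-addition of Eq.~(\ref{eq:node_B}) follows by transposing all matrices and swapping the roles of $(\mathbf{X},\mathbf{F})$ and $(\mathbf{Y},\mathbf{G})$, so no separate computation is required.
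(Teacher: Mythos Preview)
Your proposal is correct and follows essentially the same route as the paper's proof: both apply the Zha--Simon column-update formula, take the SVD of the $(d{+}1)\times(d{+}1)$ core, and then use the identity $\rho\mathbf{p}=\mathbf{B}-\mathbf{U}_1\mathbf{b}$ (in the paper's notation, $\mathbf{Q}_\mathbf{B}=R_\mathbf{B}^{-1}\mathbf{B}-\mathbf{U}_1 R_\mathbf{B}^{-1}\mathbf{W}_\mathbf{B}$) to absorb the dense residual direction into the projection $\mathbf{F}$ and leave $\Delta\mathbf{X}$ as $\mathbf{B}$ times a row vector, hence row-sparse by Lemma~\ref{lemma:nzr}. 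Your explicit $\mathbf{F},\mathbf{G},\Delta\mathbf{X},\Delta\mathbf{Y}$ coincide with the paper's Eqs.~(\ref{eq:f_node})--(\ref{eq:y_node}) under the dictionary $\mathbf{b}\leftrightarrow\mathbf{W}_\mathbf{B}$, $\rho\leftrightarrow R_\mathbf{B}$, $\mathbf{W}\leftrightarrow\mathbf{E}[{:}d]$, $\mathbf{w}^\top\leftrightarrow\mathbf{E}[d{:}]$; your treatment of the $\rho=0$ degenerate case and the row-addition symmetry are welcome extras that the paper leaves implicit.
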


\begin{proof}

Let $\mathbf{W}_\mathbf{B}=\mathbf{U}_1^\top \mathbf{B}$, and the normalized $(\mathbf{I}-\mathbf{U}_1\mathbf{U}_1^\top)\mathbf{B}$ be
\begin{equation}
\label{eq:qrnode}
R_\mathbf{B} = \Vert (\mathbf{I}-\mathbf{U}_1\mathbf{U}_1^\top) \mathbf{B} \Vert_2,\quad \mathbf{Q_B} = (\mathbf{I}-\mathbf{U}_1\mathbf{U}_1^\top) \mathbf{B} / R_\mathbf{B}
\end{equation} 
It can be proved that such a $\mathbf{Q}_\mathbf{B}$ is orthogonal to all column vectors of $\mathbf{U}_1$.

According to Zha-Simon's formula~\cite{zha1999updating}, we have
\begin{equation}
\label{eq:zhasimonnode1}
\begin{aligned}
\begin{bmatrix}
\mathbf{X}_1^{} \mathbf{Y}_1^\top & \mathbf{B}
\end{bmatrix}&=
\begin{bmatrix}
\mathbf{U}_1 \boldsymbol{\Sigma}_1 \mathbf{V}_1^\top & \mathbf{B}    
\end{bmatrix} \\ 
&=
\begin{bmatrix}
    \mathbf{U}_1 & \mathbf{Q_B}    
\end{bmatrix} 
\begin{bmatrix}
    \Sigma_1 & \mathbf{W_B} \\
      & R_\mathbf{B}
\end{bmatrix} 
\begin{bmatrix}
\mathbf{V}_1^\top & \\
& \mathbf{I}
\end{bmatrix} \\
& \approx ( 
\begin{bmatrix}
    \mathbf{U}_1 & \mathbf{Q_B}
\end{bmatrix}
  \mathbf{E} )   \boldsymbol{\Theta}   
(\begin{bmatrix}
    \mathbf{V}_1 & \\
    & \mathbf{I}
\end{bmatrix}   \mathbf{H} )^\top \\
& = \mathbf{U}_2  \boldsymbol{\Sigma}_2   \mathbf{V}_2^\top
\end{aligned}
\end{equation} 
with 
\begin{equation}
\label{eq:appd1}
\mathbf{U}_2=[\mathbf{U}_1\quad \mathbf{Q}_B]\mathbf{E}, \quad \Sigma_2=\boldsymbol{\Theta}, \quad \mathbf{V}_2=(\begin{bmatrix}
    \mathbf{V}_1 & \\
    & \mathbf{I}
\end{bmatrix}   \mathbf{H} )
\end{equation}
where the matrix product $\mathbf{E} \boldsymbol{\Theta} \mathbf{H}$ denotes a compact rank-$d$ t-SVD with
\begin{equation}
\label{eq:svd_small_node}
    \mathbf{E}, \boldsymbol{\Theta}, \mathbf{H} \gets \textbf{\texttt{t-SVD}}(\begin{bmatrix}
    \boldsymbol{\Sigma}_1 & \mathbf{W}_\mathbf{B}\\
      & R_\mathbf{B}
\end{bmatrix}, d)
\end{equation}

What is mentioned above is Zha-Simon's t-SVD update scheme.
The following will show how to obtain the space projection matrix and embedding modification vectors.

\emph{\textbf{The central idea of the proof is to express the update of 
$\ \mathbf{U}, \mathbf{V}$ in Eq.(\ref{eq:appd1}) as a space projection onto $\mathbf{U}, \mathbf{V}$ plus a matrix with at most $\Delta m$ non-zero rows.}} This is because in the update of $\mathbf{U}$ in Eq.(\ref{eq:appd1}), $\mathbf{Q}_\mathbf{B}$ can be written as 
\begin{equation}
\mathbf{Q}_\mathbf{B}=R_\mathbf{B}^{-1}(\mathbf{I}-\mathbf{U}_1\mathbf{U}_1^\top)\mathbf{B}
=R_\mathbf{B}^{-1}\mathbf{B}-\mathbf{U}_1(R_\mathbf{B}^{-1}\mathbf{U}_1^\top \mathbf{B})
\end{equation}
Notice that $R_\mathbf{B}$ is a scalar, so $R_\mathbf{B}^{-1} \mathbf{B}$ is a sparse matrix with at most $\Delta m$ non-zero rows. Moreover, $\mathbf{U}_1(R_\mathbf{B}^{-1} \mathbf{U}_1^\top \mathbf{B})$ is a space projection onto $\mathbf{U}_1$ (which can be further merged).

Specifically, we split the matrix multiplication in Eq.(\ref{eq:appd1}) for $U$ with
\begin{equation}
\begin{aligned}
\mathbf{U}_2 
&=[\mathbf{U}_1\quad \mathbf{Q}_B]\mathbf{E} \\
&=\mathbf{U}_1\mathbf{E}[:d] + \mathbf{Q}_\mathbf{B} \mathbf{E}[d:] \\
&=\mathbf{U}_1\mathbf{E}[:d] + (R_\mathbf{B}^{-1}\mathbf{B}-\mathbf{U}_1(R_\mathbf{B}^{-1}\mathbf{U}_1^\top \mathbf{B})\mathbf{E}[d:] \\
&=\mathbf{U}_1(\mathbf{E}[:d] - (R_\mathbf{B}^{-1}\mathbf{U}_1^\top \mathbf{B})\mathbf{E}[d:]) + R_\mathbf{B}^{-1} \mathbf{B} \mathbf{E}[d:]
\end{aligned}
\end{equation}
which is clearly a space projection on the columns of $\mathbf{U}_1$ plus a sparse matrix with at most $\Delta m$ non-zero rows. And, similarly, for the update on $\mathbf{V}$ in Eq.(\ref{eq:appd1}) we have
\begin{equation}
\mathbf{V}_2=(\begin{bmatrix}
    \mathbf{V}_1 & \\
    & \mathbf{I}
\end{bmatrix}   \mathbf{H} )
=\begin{bmatrix}
    \mathbf{V}_1\\
    \\
\end{bmatrix} \mathbf{H[:d]} + 
\begin{bmatrix}
    \\
    \mathbf{I}_{1\times 1}
\end{bmatrix} \mathbf{H[d:]}
\end{equation}

Since $\mathbf{X}_1=\mathbf{U}_1\sqrt{\mathbf{\Sigma}_1}$, $\mathbf{Y}_1=\mathbf{V}_1 \sqrt {\mathbf{\Sigma}_1}$ and $\mathbf{X}_2=\mathbf{U}_2\sqrt{\mathbf{\Sigma}_2}$, $\mathbf{Y}_2=\mathbf{V}_2 \sqrt {\mathbf{\Sigma}_2}$, we have

\begin{equation}
\begin{aligned}
\label{eq:fgxynode}
    \mathbf{X}_2 =& \mathbf{X}_1   \boldsymbol{\Sigma}_1^{-1/2}   (\mathbf{E}[:d]-R_\mathbf{B}^{-1} \mathbf{W}_\mathbf{B} \mathbf{E}[d:])  \boldsymbol{\Sigma}_2^{1/2} \\ 
    & + R_{\mathbf{B}}^{-1} \mathbf{B} \mathbf{E}[d:]   \boldsymbol{\Sigma}_2^{1/2}  \\
    \mathbf{Y}_2 =& \mathbf{Y}_1   \boldsymbol{\Sigma}_1^{-1/2}   \mathbf{H}[:d]  \boldsymbol{\Sigma}_2^{1/2}  \\
    & + \begin{bmatrix}
    \\
    \mathbf{I}_{1 \times 1} \\
    \end{bmatrix}   \mathbf{H}[d:]   \boldsymbol{\Sigma}_2^{1/2} 
\end{aligned}
\end{equation} 
and we take
\begin{equation}
    \label{eq:f_node}
    \mathbf{F} = \boldsymbol{\Sigma}_1^{-1/2}   (\mathbf{E}[:d]-R_\mathbf{B}^{-1} \mathbf{W}_\mathbf{B}\mathbf{E}[d:])  \boldsymbol{\Sigma}^{1/2}_2
\end{equation}
\begin{equation}
    \label{eq:g_node}
    \mathbf{G} = \boldsymbol{\Sigma}_1^{-1/2}   \mathbf{H}[:d]   \boldsymbol{\Sigma}^{1/2}_2
\end{equation}
\begin{equation}
    \label{eq:x_node}
    \Delta \mathbf{X} = R_{\mathbf{B}}^{-1} \mathbf{B} \mathbf{E}[d:]   \boldsymbol{\Sigma}_2^{1/2}
\end{equation}
\begin{equation}
    \label{eq:y_node}
    \Delta \mathbf{Y} = \begin{bmatrix}
         \\
        \mathbf{I}_{1 \times 1} \\
    \end{bmatrix}   \mathbf{H}[d:]   \boldsymbol{\Sigma}^{1/2}_2
\end{equation} 

Since $\mathbf{B}$ has at most $\Delta m$ non-zero rows (as stated in Lemma \ref{lemma:nzr}), $\Delta \mathbf{X}$ has at most $\Delta m$ non-zero rows, and it is clear that $\Delta \mathbf{Y}$ has only one non-zero row.
\end{proof}
The above proof provides the space projection matrices $\mathbf{F}, \mathbf{G}$ to use in the embedding space, and shows that $\Delta \mathbf{X}$ and $\Delta \mathbf{Y}$ require at most $\Delta m$ nodes' embedding to be modified additionally in a node change process.

\subsection{Edge Change}
\label{sec:EdgeChange}
Considering adding a directed edge $(u,v)$ to the graph, the change in the adjacency matrix can be expressed as a low-rank update by
\begin{equation}
\mathbf{A}_t \gets \mathbf{A}_{t-1} + \Delta \mathbf{A}_t, \quad 
\Delta \mathbf{A}_{t-1} = \mathbf{B}\mathbf{C}^\top
\end{equation}
 with 
\begin{equation}
\label{eq:edge_B}
 \mathbf{B}=e_u,\quad \mathbf{C}=e_v
\end{equation} where $e_i$ denotes the standard basis (i.e., a vector whose components are all zero, except the $i$-th element is 1).

\begin{theorem}[Space Projection for Edge Change]
\label{theorem:edge}
Assuming $\mathbf{B}$ and $\mathbf{C}$ are matrices $\in \mathbb{R}^{n \times 1}$ with at most $\Delta m$ non-zero elements. 
Let $\mathbf{X}_1 \in \mathbb{R}^{n \times d}, \mathbf{Y}_1 \in \mathbb{R}^{n \times d}$ be arbitrary network embedding with 
\begin{equation}
    \mathbf{X}_1^{}   \mathbf{Y}_1^\top = \mathbf{U}_1 \boldsymbol{\Sigma}_1 \mathbf{V}_1^\top = \widetilde{\mathbf{A}},
\end{equation}
where $\widetilde{\mathbf{A}} \in \mathbb{R}^{n \times n}$, 

then there exists space projection matrices $\mathbf{F} \in \mathbb{R}^{d\times d}, \mathbf{G} \in \mathbb{R}^{d \times d}$, and embedding modification vectors $\Delta \mathbf{X}\in \mathbb{R}^{n \times d}$ and $\Delta \mathbf{Y} \in \mathbb{R}^{n \times d}$ with at most $\Delta m$ non-zero rows , such that
\begin{equation}
    \mathbf{X}_2 = \mathbf{X}_{1}   \mathbf{F} + \Delta \mathbf{X}, \quad
    \mathbf{Y}_2 = \mathbf{Y}_{1}   \mathbf{G} + \Delta \mathbf{Y},
\end{equation} where $\mathbf{X}_2 \in \mathbb{R}^{n \times d}, \mathbf{Y}_2 \in \mathbb{R}^{n \times d}$ is a network embedding from a rank-$d$ t-SVD of $\widetilde{\mathbf{A}} + \mathbf{B} \mathbf{C} ^ \top $, i.e.,
\begin{equation}
\label{eq:node_target1}
    (\mathbf{U}_2, \boldsymbol{\Sigma}_2, \mathbf{V}_2) \gets \texttt{\textbf{t-SVD}}(\widetilde{\mathbf{A}} + \mathbf{B} \mathbf{C}^\top, d)
\end{equation}
\begin{equation}
\label{eq:svd_edge_2}
     \mathbf{X}_2 = \mathbf{U}_2   \sqrt{\boldsymbol{\Sigma}_2}, \quad \mathbf{Y}_2 = \mathbf{V}_2   \sqrt{\boldsymbol{\Sigma}_2}
\end{equation}
\end{theorem}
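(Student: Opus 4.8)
The plan is to mirror the node-change argument of Theorem~\ref{theorem:node}, but now applying Zha-Simon's update on \emph{both} sides of the factorization, since an edge insertion $\widetilde{\mathbf{A}}+\mathbf{B}\mathbf{C}^\top$ is a genuine rank-one perturbation rather than a column append. First I would orthogonalize $\mathbf{B}$ against the left singular basis and $\mathbf{C}$ against the right singular basis. Concretely, set $\mathbf{W_B}=\mathbf{U}_1^\top\mathbf{B}$, $\mathbf{W_C}=\mathbf{V}_1^\top\mathbf{C}$, and define the normalized residuals
\begin{equation}
R_\mathbf{B}=\Vert(\mathbf{I}-\mathbf{U}_1\mathbf{U}_1^\top)\mathbf{B}\Vert_2,\quad \mathbf{Q_B}=(\mathbf{I}-\mathbf{U}_1\mathbf{U}_1^\top)\mathbf{B}/R_\mathbf{B},
\end{equation}
and analogously $R_\mathbf{C},\mathbf{Q_C}$ from $\mathbf{C}$ and $\mathbf{V}_1$. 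As argued after Eq.~(\ref{eq:qrnode}), $\mathbf{Q_B}$ is orthogonal to the columns of $\mathbf{U}_1$ and $\mathbf{Q_C}$ to those of $\mathbf{V}_1$, so $[\mathbf{U}_1\;\mathbf{Q_B}]$ and $[\mathbf{V}_1\;\mathbf{Q_C}]$ both have orthonormal columns.

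Next I would write the exact intermediate factorization. Using the decompositions $\mathbf{B}=\mathbf{U}_1\mathbf{W_B}+\mathbf{Q_B}R_\mathbf{B}$ and $\mathbf{C}=\mathbf{V}_1\mathbf{W_C}+\mathbf{Q_C}R_\mathbf{C}$, the perturbed matrix collapses into the two orthonormal frames as $\widetilde{\mathbf{A}}+\mathbf{B}\mathbf{C}^\top=[\mathbf{U}_1\;\mathbf{Q_B}]\,\mathbf{K}\,[\mathbf{V}_1\;\mathbf{Q_C}]^\top$, where the $(d+1)\times(d+1)$ core is
\begin{equation}
\mathbf{K}=\begin{bmatrix}\boldsymbol{\Sigma}_1 & \\ & 0\end{bmatrix}+\begin{bmatrix}\mathbf{W_B}\\ R_\mathbf{B}\end{bmatrix}\begin{bmatrix}\mathbf{W_C}\\ R_\mathbf{C}\end{bmatrix}^\top.
\end{equation}
Taking a compact rank-$d$ t-SVD $\mathbf{E},\boldsymbol{\Theta},\mathbf{H}\gets\texttt{\textbf{t-SVD}}(\mathbf{K},d)$ and setting $\mathbf{U}_2=[\mathbf{U}_1\;\mathbf{Q_B}]\mathbf{E}$, $\boldsymbol{\Sigma}_2=\boldsymbol{\Theta}$, $\mathbf{V}_2=[\mathbf{V}_1\;\mathbf{Q_C}]\mathbf{H}$ then yields the target rank-$d$ t-SVD of Eq.~(\ref{eq:node_target1}), exactly paralleling Eqs.~(\ref{eq:zhasimonnode1})--(\ref{eq:svd_small_node}).

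The heart of the argument is then the same ``projection plus sparse'' decomposition, now carried out symmetrically on both sides. Splitting $\mathbf{E}$ into its first $d$ rows $\mathbf{E}[:d]$ and its last row $\mathbf{E}[d:]$ and substituting $\mathbf{Q_B}=R_\mathbf{B}^{-1}\mathbf{B}-\mathbf{U}_1(R_\mathbf{B}^{-1}\mathbf{W_B})$, I would rewrite $\mathbf{U}_2=\mathbf{U}_1\bigl(\mathbf{E}[:d]-R_\mathbf{B}^{-1}\mathbf{W_B}\mathbf{E}[d:]\bigr)+R_\mathbf{B}^{-1}\mathbf{B}\,\mathbf{E}[d:]$, and symmetrically $\mathbf{V}_2=\mathbf{V}_1\bigl(\mathbf{H}[:d]-R_\mathbf{C}^{-1}\mathbf{W_C}\mathbf{H}[d:]\bigr)+R_\mathbf{C}^{-1}\mathbf{C}\,\mathbf{H}[d:]$. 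Converting back via $\mathbf{X}_1=\mathbf{U}_1\boldsymbol{\Sigma}_1^{1/2}$ and $\mathbf{X}_2=\mathbf{U}_2\boldsymbol{\Sigma}_2^{1/2}$ (and likewise for $\mathbf{Y}$, using Eq.~(\ref{eq:svd_edge_2})) lets me read off
\begin{equation}
\mathbf{F}=\boldsymbol{\Sigma}_1^{-1/2}\bigl(\mathbf{E}[:d]-R_\mathbf{B}^{-1}\mathbf{W_B}\mathbf{E}[d:]\bigr)\boldsymbol{\Sigma}_2^{1/2},\quad \Delta\mathbf{X}=R_\mathbf{B}^{-1}\mathbf{B}\,\mathbf{E}[d:]\boldsymbol{\Sigma}_2^{1/2},
\end{equation}
with $\mathbf{G},\Delta\mathbf{Y}$ obtained by replacing $\mathbf{B},\mathbf{E},\mathbf{W_B},R_\mathbf{B}$ by $\mathbf{C},\mathbf{H},\mathbf{W_C},R_\mathbf{C}$. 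Since $\mathbf{B}$ and $\mathbf{C}$ each have at most $\Delta m$ non-zero rows, Lemma~\ref{lemma:nzr} forces $\Delta\mathbf{X}$ and $\Delta\mathbf{Y}$ to have at most $\Delta m$ non-zero rows, which closes the proof.

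The step I expect to be the main obstacle is the right-hand (row) side. Unlike the node-change proof, where appending a column makes the new right singular directions the clean block $[\mathbf{V}_1\;0;\,0\;\mathbf{I}]$ so that $\Delta\mathbf{Y}$ trivially has a single non-zero row, here $\mathbf{C}$ genuinely carries a component inside the existing row space $\mathbf{V}_1$; the orthogonal-complement construction $\mathbf{Q_C}$ and the sparse-extraction argument must therefore be repeated in full for $\mathbf{V}_2$. Care is needed to verify that $[\mathbf{V}_1\;\mathbf{Q_C}]$ is orthonormal and that the rank-one cross term $\mathbf{W_B}\mathbf{W_C}^\top$ and its residual couplings enter only through the shared core $\mathbf{K}$, so that the two space projections $\mathbf{F}$ and $\mathbf{G}$ remain independent $d\times d$ maps and the two sparse corrections do not interfere.
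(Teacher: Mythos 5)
Your proposal is correct and follows essentially the same route as the paper's proof: the same two-sided orthogonalization producing $R_\mathbf{B},\mathbf{Q_B},R_\mathbf{C},\mathbf{Q_C}$, the same $(d+1)\times(d+1)$ core from Eq.~(\ref{eq:svd_small_edge}), and the identical formulas (\ref{eq:f_edge})--(\ref{eq:y_edge}) with sparsity concluded via Lemma~\ref{lemma:nzr}. You actually spell out the intermediate factorization $[\mathbf{U}_1\;\mathbf{Q_B}]\,\mathbf{K}\,[\mathbf{V}_1\;\mathbf{Q_C}]^\top$ more explicitly than the paper, which simply says ``similar to the proof of Theorem~\ref{theorem:node}.''
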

\begin{proof}

Similar to the proof of Theorem \ref{theorem:node}, firstly, let $\mathbf{W}_\mathbf{B}=\mathbf{U}_1^\top B$ and $\mathbf{W}_\mathbf{C}=\mathbf{V}_1^\top \mathbf{C}$. 
And the normalized $(\mathbf{I}-\mathbf{U}_1\mathbf{U}_1^\top) \mathbf{B}$, $(\mathbf{I}-\mathbf{V}_1\mathbf{V}_1^\top) \mathbf{C}$ can be calculated by
\begin{equation}
\label{eq:qredge1}
R_\mathbf{B} = \Vert (\mathbf{I}-\mathbf{U}_1\mathbf{U}_1^\top) \mathbf{B} \Vert_2,
\quad
\mathbf{Q_B} = (\mathbf{I}-\mathbf{U}_1\mathbf{U}_1^\top) \mathbf{B} / R_\mathbf{B}
\end{equation}
\begin{equation}
\label{eq:qredge2}
R_\mathbf{C} = \Vert (\mathbf{I}-\mathbf{V}_1\mathbf{V}_1^\top) \mathbf{C} \Vert_2,
\quad
\mathbf{Q_C} = (\mathbf{I}-\mathbf{V}_1\mathbf{V}_1^\top) \mathbf{C} / R_\mathbf{C}
\end{equation}

Then let 
\begin{equation}
\label{eq:svd_small_edge}
    \mathbf{E},\mathbf{\Sigma}_2, \mathbf{H} \gets \texttt{\textbf{t-SVD}}(
    \begin{bmatrix}
        \Sigma_1 & \mathbf{0} \\
        \mathbf{0} & \mathbf{0}
    \end{bmatrix}
    +
    \begin{bmatrix}
        \mathbf{W}_\mathbf{B} \\
        R_\mathbf{B}
    \end{bmatrix}
    \begin{bmatrix}
        \mathbf{W}_\mathbf{C} \\
        R_\mathbf{C}
    \end{bmatrix}^\top, d)
\end{equation}
We can get the space projection matrices $\mathbf{F}, \mathbf{G}$ and embedding modification vectors $\Delta \mathbf{X}$, $\Delta \mathbf{Y}$ by
\begin{equation}
    \label{eq:f_edge}
    \mathbf{F} = \boldsymbol{\Sigma}_1^{-1/2}   (\mathbf{E}[:d]- R_\mathbf{B}^{-1} \mathbf{W}_\mathbf{B} \mathbf{E}[d:])  \boldsymbol{\Sigma}^{1/2}_2
\end{equation}
\begin{equation}
    \label{eq:g_edge}
    \mathbf{G} = \boldsymbol{\Sigma}_1^{-1/2}   (\mathbf{H}[:d] - R_\mathbf{C}^{-1} \mathbf{W}_\mathbf{C} \mathbf{H}[d:])   \boldsymbol{\Sigma}^{1/2}_2
\end{equation}
\begin{equation}
\label{eq:x_edge}
    \Delta \mathbf{X} = R_{\mathbf{B}}^{-1} \mathbf{B} \mathbf{E}[d:] \boldsymbol{\Sigma}_2^{1/2}
\end{equation}
\begin{equation}
\label{eq:y_edge}
    \Delta \mathbf{Y} = R_{\mathbf{C}}^{-1} \mathbf{C} \mathbf{H}[d:] \boldsymbol{\Sigma}_2^{1/2}
\end{equation} 

Since $\mathbf{B}, \mathbf{C}$ have at most $\Delta m$ non-zero rows, by Lemma \ref{lemma:nzr}, $\Delta \mathbf{X}, \Delta \mathbf{Y}$ have at most $\Delta m$ non-zero rows.
\end{proof}
The above proof provides the space projection matrices $\mathbf{F}, \mathbf{G}$ to use in the embedding space, and shows that $\Delta \mathbf{X}$ and $\Delta \mathbf{Y}$ require at most $\Delta m$ nodes' embedding to be modified additionally in an edge change process.
\subsection{Dynamic Embedding Enhancement via PPR}
\label{sec:pagerank}
In order to capture higher-order neighbors's information, an enhancement is applied to the dynamic network embedding.
Specifically, we apply a dynamic \emph{Personalized PageRank} (PPR)\cite{page1999pagerank} to the updated context embedding $\mathbf{X}$ to get the enhanced context embedding $\mathbf{Z}$. 

\textbf{PPR Enhancement in Static Network Embedding.} To better capture higher-order neighborhood information, a mainstream approach on static network embeddings is to use \emph{Personalized PageRank} (PPR)~\cite{page1999pagerank} to enhance the network embeddings (e.g., APP, Lemane, STRAP, VERSE, and NRP). 

Specifically, for a graph with adjacency matrix $\mathbf{A}$ and graph signal $\mathbf{X}\in \mathbb{R}^{n\times d}$, the principles of PPR can be formulated as
\begin{equation}
\label{eq:ppr}
    \texttt{PPR}(\mathbf{X})=\sum_{i=0}^\infty \alpha(1-\alpha)^i (\mathbf{D}^{-1} \mathbf{A})^i \mathbf{X}
\end{equation}
where $\mathbf{D} \in \mathbb{R}^{n\times n}$ is the diagonal out-degree matrix and $\alpha$ is the damping factor for PageRank. 

Since solving the PPR is an infinite process, the truncated version of PPR is commonly used in practice. Moreover, to better control errors and balance efficiency, existing PageRank-based static network embedding methods (e.g., all methods mentioned above) typically introduce an error tolerance $\epsilon$ and require PPR to converge to that error.

\textbf{Dynamic Embedding Enhancement via Dynamic PPR.}
Unfortunately, despite the success of PageRank enhancements in static network embeddings, their methods cannot be directly migrated to dynamic scenarios. In this work, following \cite{bojchevski2020scaling, wang2021approximate, zheng2022instant, zhang2016approximate}, we approximate the PPR based on graph propagation under a given error limit $\epsilon$. 

Specifically, we use an InstantGNN~\cite{zheng2022instant} to enhance the embedding obtained in Section \ref{sec:NodeChange} and Section \ref{sec:EdgeChange}. InstantGNN is an efficient dynamic PPR method suitable for graph structure changes and node signals (attributes).
In Dynamic Embedding Enhancement, we use the embedding obtained from dynamic truncated singular value decomposition as the graph signal input to InstantGNN. Moreover, as the network changes, according to Theorem \ref{theorem:node} and Theorem \ref{theorem:edge}, at most $\Delta m$ nodes' signals need to be changed. The 
InstantGNN updates the result, which converges to an error tolerance $\epsilon$ according to the graph structure and node signal change to achieve the Dynamic Embedding Enhancement.

\begin{algorithm}[tp]
  \LinesNumbered 
  \SetKwFunction{UpdateEmbeddingN}{UpdateEmbeddingN}
  \SetKwFunction{UpdateEmbeddingE}{UpdateEmbeddingE}
  \SetKwProg{myproc}{Procedure}{}{}
  
  \myproc{\UpdateEmbeddingN{$\mathbf{X}, \mathbf{Y}, \mathbf{P_X}, \mathbf{P_Y}, \mathbf{\Sigma}_1, \mathbf{B}, d$}}{
    \tcc{Step 1: Orthogonalization}
    $\mathbf{W}_\mathbf{B} \gets (\mathbf{B}^\top \mathbf{X} \mathbf{P_X} \mathbf{\Sigma}_1^{1/2})^\top $ \;
    $R_\mathbf{B} \gets \sqrt{ \Vert \mathbf{B} \Vert_2^2 - \Vert \mathbf{W}_\mathbf{B} \Vert_2^2 }$\;
    \tcc{Step 2: Rediagonalization}
    $\mathbf{M} \gets \begin{bmatrix}
    \Sigma_1 & \mathbf{W}_\mathbf{B} \\
      & R_\mathbf{B} \end{bmatrix}$
      \tcp*{$\mathbf{M} \in \mathbb{R}^{d+1, d+1}$}
    $\mathbf{E}, \Sigma_2, \mathbf{H} \gets $ \texttt{t-SVD}($\mathbf{M}, d$) \;
    \tcc{Step 3: Space Projection}
    $\mathbf{F} \gets \boldsymbol{\Sigma}_1^{-1/2}  (\mathbf{E}[:d]-R_\mathbf{B}^{-1} \mathbf{W}_\mathbf{B}   \mathbf{E}[d:]) \boldsymbol{\Sigma}^{1/2}_2$ \;
    $\mathbf{G} \gets \boldsymbol{\Sigma}_1^{-1/2}    \mathbf{H}[:d]    \boldsymbol{\Sigma}^{1/2}_2 $ \;
    $\Delta \mathbf{X} = R_\mathbf{B}^{-1} \mathbf{B}   \mathbf{E}[d:]   \boldsymbol{\Sigma}_2^{1/2}$ \;
    $\Delta \mathbf{Y} =   \begin{bmatrix}
        \\
        \mathbf{I}_{1 \times 1} \\
    \end{bmatrix}  \mathbf{H}[d:]    \boldsymbol{\Sigma}^{1/2}_2$ \;
    
   \KwRet{$\mathbf{F}, \mathbf{G}, \Delta \mathbf{ X}, \Delta \mathbf{Y}, \Sigma_2$ }\;}
  \SetKwProg{myproc}{Procedure}{}{}
  \myproc{\UpdateEmbeddingE{$\mathbf{X}, \mathbf{Y}, \mathbf{P_X}, \mathbf{P_Y}, \mathbf{\Sigma}_1, \mathbf{B}, \mathbf{C}, d$}}{
    \tcc{Step 1: Orthogonalization}
    $\mathbf{W}_\mathbf{B} \gets (\mathbf{B}^\top \mathbf{X} \mathbf{P_X} \mathbf{\Sigma}_1^{1/2})^\top $ \;
    $\mathbf{W}_\mathbf{C} \gets (\mathbf{C}^\top \mathbf{Y} \mathbf{P_Y} \mathbf{\Sigma}_1^{1/2})^\top $ \;
    $R_\mathbf{B} \gets \sqrt{ \Vert \mathbf{B} \Vert_2^2 - \Vert \mathbf{W}_\mathbf{B} \Vert_2^2 }$\;
    $R_\mathbf{C} \gets \sqrt{ \Vert \mathbf{C} \Vert_2^2 - \Vert \mathbf{W}_\mathbf{C} \Vert_2^2 }$\;

    \tcc{Step 2: Rediagonalization}
    $\mathbf{M} \gets\begin{bmatrix}
            \Sigma_1 & \mathbf{0} \\
            \mathbf{0} & \mathbf{0}
        \end{bmatrix}
        +
        \begin{bmatrix}
            \mathbf{W}_\mathbf{B} \\
            \mathbf{R}_\mathbf{B}
        \end{bmatrix}
        \begin{bmatrix}
            \mathbf{W}_\mathbf{C} \\
            \mathbf{R}_\mathbf{C}
        \end{bmatrix}^\top  $ \;
    $\mathbf{E}, \Sigma_2, \mathbf{H} \gets $ \texttt{t-SVD}($\mathbf{M}, d$) \;
    
    \tcc{Step 3: Space Projection}
    $\mathbf{F} \gets \boldsymbol{\Sigma}_1^{-1/2}   (\mathbf{E}[:d]-R_\mathbf{B}^{-1}\mathbf{W}_\mathbf{B}   \mathbf{E}[d:])  \boldsymbol{\Sigma}^{\frac{1}{2}}_2$ \; 
    $\mathbf{G} \gets \boldsymbol{\Sigma}_1^{-1/2}   (\mathbf{H}[:d]-R_\mathbf{C}^{-1}\mathbf{W}_\mathbf{C}   \mathbf{H}[d:])   \boldsymbol{\Sigma}^{1/2}_2$ \;
    $\Delta \mathbf{X} \gets R_\mathbf{B}^{-1} \mathbf{B}   \mathbf{E}[d:]   \boldsymbol{\Sigma}_2^{1/2}$ \;
    $\Delta \mathbf{Y} \gets R_\mathbf{C}^{-1} \mathbf{C}   \mathbf{H}[d:]   \boldsymbol{\Sigma}_2^{1/2}$ \;
    \KwRet{$\mathbf{F}, \mathbf{G}, \Delta \mathbf{X}, \Delta \mathbf{Y}, \Sigma_2$ }\;}
  \caption{Update embedding via space projection}
  \label{algo:sp}

\end{algorithm}

\subsection{the Proposed DAMF Algorithm}
\label{sec:damf}
\begin{algorithm}[t] 
    \caption{DAMF}
    \label{algo:damf}
    \LinesNumbered 
    \SetKwFunction{UpdateEmbeddingN}{UpdateEmebddingN}
    \SetKwFunction{UpdateEmbeddingM}{UpdateEmbeddingM}
    \SetKwFunction{DynamicEnhancement}{DynamicEnhancement}
    \KwIn{Network embedding $\mathbf{X}_{b}, \mathbf{Y}_{b}$ in the base space;
    enhanced embedding $\mathbf{Z}_{b}$; space projection matrix $\mathbf{P}_{\mathbf{X}}, \mathbf{P}_{\mathbf{Y}}$; singular values $\boldsymbol{{\Sigma}}$; the Changes occurring in the network $Event_t$ residual vector $\mathbf{r}$; PageRank damping factor $\alpha$; error tolerance $\epsilon$; Graph $\mathcal{G}$; }
    \KwOut{Updated $\mathbf{X}_{b}, \mathbf{Y}_{b}, \mathbf{Z}_{b}, \mathbf{P}_{\mathbf{X}},  \mathbf{P}_{\mathbf{Y}}, \boldsymbol{{\Sigma}}$}

    \eIf{$Event_t$ is node change}{
        Convert $Event_t$ to $\mathbf{B}_1, \mathbf{B}_2$ by Eq. (\ref{eq:node_B}) \;
        $\mathbf{F}, \mathbf{G}, \Delta \mathbf{X}, \Delta \mathbf{Y} , \Sigma \gets$ \UpdateEmbeddingN{$\mathbf{X}, \mathbf{Y}, \mathbf{\Sigma}, \mathbf{B}_1$} \;
        Update $\mathbf{X}_{b}, \mathbf{Y}_{b}, \mathbf{P}_{\mathbf{X}}, \mathbf{P}_{\mathbf{Y}}$ by Eq. (\ref{eq:update}) \;
        $\mathbf{F}, \mathbf{G}, \Delta \mathbf{X}, \Delta \mathbf{Y} , \Sigma \gets$ \UpdateEmbeddingN{$\mathbf{Y}, \mathbf{X}, \mathbf{\Sigma}, \mathbf{B}_2$}  \;
        Update $\mathbf{X}_{b}, \mathbf{Y}_{b}, \mathbf{P}_{\mathbf{X}}, \mathbf{P}_{\mathbf{Y}}$ by Eq. (\ref{eq:update}) \;
    }{
        Convert $Event_t$ to $\mathbf{B}, \mathbf{C}$ by Eq. (\ref{eq:edge_B}) \;
        $\mathbf{F}, \mathbf{G}, \Delta \mathbf{X}, \Delta \mathbf{Y} , \Sigma \gets$ \UpdateEmbeddingM{$\mathbf{X}, \mathbf{Y}, \mathbf{\Sigma}, \mathbf{B}, \mathbf{C}$}\;
        Update $\mathbf{X}_{b}, \mathbf{Y}_{b}, \mathbf{P}_{\mathbf{X}}, \mathbf{P}_{\mathbf{Y}}$ by Eq. (\ref{eq:update}) \;
    }
    $\mathbf{Z}_{b}, \mathbf{r} \gets $ \DynamicEnhancement{$\mathbf{Z}_{b}, \mathbf{r}, Event_t, \mathcal{G}, \Delta \mathbf{X}, \alpha, \epsilon$)}\;
    \KwRet{$\mathbf{X}_{b}, \mathbf{Y}_{b}, \mathbf{Z}_{b}, \mathbf{P}_{\mathbf{X}},  \mathbf{P}_{\mathbf{Y}}, \boldsymbol{{\Sigma}}$}
\end{algorithm}

In this section, we propose the Dynamic Adjacency Matrix Factorization (DAMF) algorithm, consisting of four steps: Orthogonalization, Rediagonalization, Space Projection, and Dynamic Embedding Enhancement.
Algorithm \ref{algo:sp} and Algorithm \ref{algo:damf} is the pseudo-code for the DAMF algorithm.

\textbf{Step 1: Orthogonalization.}
According to the equations $\mathbf{X} = \mathbf{U} \sqrt{\Sigma}$ and $\mathbf{X}_t=\mathbf{X}_b\mathbf{P_X}$, we know that $\mathbf{U}=\mathbf{X}_b \mathbf{P_X} \Sigma^{-1/2}$. 
Then, with $\mathbf{W_B} = \mathbf{U}^\top\mathbf{B}$ (and $\mathbf{W_C} = \mathbf{V}^\top\mathbf{C}$ for edge change), we can get that $R_B=\sqrt{ \Vert \mathbf{B}\Vert_2^2 - \Vert \mathbf{W}_\mathbf{B} \Vert_2^2 }$, (and $R_C=\sqrt{ \Vert \mathbf{C}\Vert_2^2 - \Vert \mathbf{W}_\mathbf{C} \Vert_2^2 }$ for edge change) by Eq. (\ref{eq:qrnode}) and Eq. (\ref{eq:qredge2}) and Proposition \ref{proposition:fast}.

\begin{proposition}
    \label{proposition:fast}
    Let $\mathbf{U}\in \mathbb{R}^{n\times d}$ be an arbitrary orthonormal matrix with $\mathbf{U}^\top\mathbf{U}=\mathbf{I}$. and $\vec{x} \in \mathbb{R}^{n}$ be an arbitrary vector, then 
    $$\Vert (\mathbf{I}-\mathbf{U} \mathbf{U}^\top)\vec{x} \Vert_2 = \sqrt{ \Vert \vec{x}\Vert_2^2 - \Vert \mathbf{U}^\top \vec{x}\Vert_2^2 }$$
\end{proposition}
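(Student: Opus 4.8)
The plan is to recognize that $\mathbf{U}\mathbf{U}^\top$ is the orthogonal projector onto the column space of $\mathbf{U}$, so that $\mathbf{I}-\mathbf{U}\mathbf{U}^\top$ is the projector onto the orthogonal complement, and then to reduce the identity to the Pythagorean decomposition of $\vec{x}$. Concretely, I would work with the squared norm throughout and defer taking square roots to the very end, starting from $\Vert(\mathbf{I}-\mathbf{U}\mathbf{U}^\top)\vec{x}\Vert_2^2 = \vec{x}^\top(\mathbf{I}-\mathbf{U}\mathbf{U}^\top)^\top(\mathbf{I}-\mathbf{U}\mathbf{U}^\top)\vec{x}$.

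The key algebraic fact to establish first is that $\mathbf{I}-\mathbf{U}\mathbf{U}^\top$ is both symmetric and idempotent. Symmetry is immediate from $(\mathbf{U}\mathbf{U}^\top)^\top=\mathbf{U}\mathbf{U}^\top$. For idempotence I would expand $(\mathbf{U}\mathbf{U}^\top)(\mathbf{U}\mathbf{U}^\top)=\mathbf{U}(\mathbf{U}^\top\mathbf{U})\mathbf{U}^\top=\mathbf{U}\mathbf{U}^\top$, invoking the hypothesis $\mathbf{U}^\top\mathbf{U}=\mathbf{I}$ at exactly this point; this is the only place where the orthonormality of $\mathbf{U}$ is used. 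Together these give $(\mathbf{I}-\mathbf{U}\mathbf{U}^\top)^\top(\mathbf{I}-\mathbf{U}\mathbf{U}^\top)=(\mathbf{I}-\mathbf{U}\mathbf{U}^\top)^2=\mathbf{I}-\mathbf{U}\mathbf{U}^\top$.

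Substituting this back collapses the squared norm to a single quadratic form, which I would then split linearly: $\vec{x}^\top(\mathbf{I}-\mathbf{U}\mathbf{U}^\top)\vec{x}=\vec{x}^\top\vec{x}-(\mathbf{U}^\top\vec{x})^\top(\mathbf{U}^\top\vec{x})=\Vert\vec{x}\Vert_2^2-\Vert\mathbf{U}^\top\vec{x}\Vert_2^2$. Taking the nonnegative square root of both sides then yields the claimed identity. No positivity issue arises, since the left-hand side is manifestly a norm and the right-hand radicand is equal to that same nonnegative quantity.

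Since the argument is a direct computation, I expect no genuine obstacle beyond bookkeeping; the only conceptual point worth flagging is that the identity is precisely the Pythagorean split of $\vec{x}$ into its projection $\mathbf{U}\mathbf{U}^\top\vec{x}$ onto the range of $\mathbf{U}$ and its orthogonal residual. The practical payoff that motivates stating it is that $R_\mathbf{B}$ can then be obtained from the already-available scalars $\Vert\mathbf{B}\Vert_2$ and $\Vert\mathbf{W}_\mathbf{B}\Vert_2=\Vert\mathbf{U}^\top\mathbf{B}\Vert_2$, avoiding the explicit and costly formation of the large residual vector $(\mathbf{I}-\mathbf{U}\mathbf{U}^\top)\mathbf{B}$.
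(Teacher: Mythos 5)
Your proposal is correct and follows essentially the same route as the paper's proof: both reduce the squared norm to the quadratic form $\vec{x}^\top(\mathbf{I}-\mathbf{U}\mathbf{U}^\top)^\top(\mathbf{I}-\mathbf{U}\mathbf{U}^\top)\vec{x}$, collapse it to $\vec{x}^\top(\mathbf{I}-\mathbf{U}\mathbf{U}^\top)\vec{x}$ using $\mathbf{U}^\top\mathbf{U}=\mathbf{I}$ (the paper does this by expanding $\mathbf{I}-2\mathbf{U}\mathbf{U}^\top+\mathbf{U}\mathbf{U}^\top\mathbf{U}\mathbf{U}^\top$, which is exactly your symmetry-plus-idempotence step), and then split it into $\Vert\vec{x}\Vert_2^2-\Vert\mathbf{U}^\top\vec{x}\Vert_2^2$. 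Your added framing of the identity as the Pythagorean decomposition and the remark on why it matters for computing $R_\mathbf{B}$ cheaply are accurate but do not change the argument.
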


\textbf{Step 2: Rediagonalization.}
In this step, follow Eq. (\ref{eq:svd_small_node}) for node change, and Eq. (\ref{eq:svd_small_edge}) for edge change to get the t-SVD.

\textbf{Step 3: Space Rotation.}
For node change, get $\mathbf{F}, \mathbf{G}, \Delta \mathbf{X}, \Delta \mathbf{Y}$ by Eq. (\ref{eq:f_node}), Eq. (\ref{eq:g_node}), Eq. (\ref{eq:x_node}), Eq. (\ref{eq:y_node}), respectively. And for edge change, get $\mathbf{F}, \mathbf{G}, \Delta \mathbf{X}, \Delta \mathbf{Y}$ by Eq. (\ref{eq:f_edge}), Eq. (\ref{eq:g_edge}), Eq. (\ref{eq:x_edge}), Eq. (\ref{eq:y_edge}), respectively. Then update the network embedding with space projection matrix $\mathbf{F}, \mathbf{G}, \Delta \mathbf{X}, \Delta \mathbf{Y}$ by Eq. (\ref{eq:update}).

\textbf{Step 4: Dynamic Embedding Enhancement.} 
In this step, the PPR-enhanced embedding is updated by InstantGNN~\cite{zheng2022instant}.
Specifically, InstantGNN can estimate the PPR values for dynamic graph structures and dynamic node attributes.
Here, we treat the changes in the graph as the dynamic graph structure, and the updates of few nodes' embedding as the dynamic node attributes.
InstantGNN will update the embedding so that the results converge to the given error tolerance.

\textbf{Initialization of DAMF.}
Initially, $\mathbf{X}_b, \mathbf{Y}_b$ is set by a t-SVD of the adjacency matrix $\mathbf{A}$ of the initial graph, and $\mathbf{P_X}, \mathbf{P_y}$ is set to the identity matrix. 
Then, use the basic propagation of PPR in InstantGNN~\cite{zheng2022instant} to enhance the initial network embedding.
\subsection{Complexity Analysis}
\label{sec:complexity}

In this section, we analyze the DAMF algorithm in terms of time and space complexity.
Note that the time complexity we analysis here is only for a single graph change.

DAMF is very efficient due to the fact that the large-scale matrix multiplications involved in DAMF are mainly of two kinds showed in Figure \ref{fig:matrixmultiply}. And Lemma \ref{lemma:complexity_A} (Figure \ref{fig:matrixmultiply}(a)) and Lemma \ref{lemma:complexity_B} ((Figure \ref{fig:matrixmultiply}(b))) demonstrate how to use sparsity of these two special matrix multiplications and how they can be efficiently computed.
The proofs of these two lemmas are in the Appendix \ref{appendix:complexity_A} and Appendix \ref{appendix:complexity_B}.

\begin{figure}[htbp]
	\centering
        \setlength{\abovecaptionskip}{0.cm}
        \setlength{\belowcaptionskip}{-0.cm}
	\subfigure[\label{fig:a}]{
		\includegraphics[scale=2]{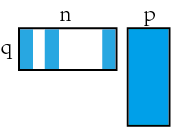}}
        \quad \quad
	\subfigure[\label{fig:c}]{
		\includegraphics[scale=2]{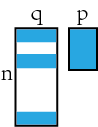}}
	\caption{Two special matrix multiplications that can be efficiently computed as proved by Lemma 2 and Lemma 3 (the white color indicates zero elements in matrices)}
        \label{fig:matrixmultiply}
\end{figure}

\begin{lemma}
    \label{lemma:complexity_A}
    Let $\mathbf{A} \in \mathbb{R}^{n\times p}, \mathbf{B}\in \mathbb{R}^{n\times q}$ be arbitrary matrices with $\mathbf{B}$ has $t$ non-zero rows, the time complexity to calculate $\mathbf{B}^\top \mathbf{A}$ is $O(tpq)$.
\end{lemma}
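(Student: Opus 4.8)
The plan is to exploit the row-sparsity of $\mathbf{B}$ through the outer-product form of the matrix product, rather than performing the naive $(q\times n)\times(n\times p)$ multiplication which would cost $O(npq)$. First I would rewrite the product as a sum of rank-one terms indexed by the shared dimension $n$, namely
\begin{equation}
\mathbf{B}^\top \mathbf{A} = \sum_{k=1}^{n} \mathbf{B}[k]^\top \mathbf{A}[k],
\end{equation}
where $\mathbf{B}[k]^\top \in \mathbb{R}^{q\times 1}$ is the $k$-th row of $\mathbf{B}$ written as a column and $\mathbf{A}[k] \in \mathbb{R}^{1\times p}$ is the $k$-th row of $\mathbf{A}$, so that each summand is a $q\times p$ outer product.

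The key observation is that whenever $\mathbf{B}[k]=\mathbf{0}$ the corresponding term $\mathbf{B}[k]^\top \mathbf{A}[k]$ vanishes identically. Hence, letting $S\subseteq\{1,\dots,n\}$ denote the index set of the $t$ non-zero rows of $\mathbf{B}$, the sum collapses to
\begin{equation}
\mathbf{B}^\top \mathbf{A} = \sum_{k\in S} \mathbf{B}[k]^\top \mathbf{A}[k].
\end{equation}
Each of the $t$ surviving outer products requires $O(pq)$ scalar multiplications to form and $O(pq)$ additions to accumulate into the running $q\times p$ result, while fetching the relevant row $\mathbf{A}[k]$ costs $O(p)$ and is dominated. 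Summing over the $t$ terms therefore yields a total cost of $O(tpq)$.

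The one subtlety I would be careful to state explicitly — and the closest thing to an obstacle in an otherwise elementary count — is that the bound presumes the $t$ non-zero rows of $\mathbf{B}$ together with their indices can be enumerated directly, i.e. $\mathbf{B}$ is held in a row-sparse representation. This is exactly the form taken by the embedding-modification matrices $\Delta\mathbf{X},\Delta\mathbf{Y}$ and the update vectors $\mathbf{B},\mathbf{C}$ produced in Theorem \ref{theorem:node} and Theorem \ref{theorem:edge}, so that no scan of the $n-t$ zero rows is needed and the argument delivers $O(tpq)$ without any hidden $O(nq)$ term. As a cross-check I would note the equivalent entrywise view: each of the $pq$ entries $(\mathbf{B}^\top\mathbf{A})[i,j]=\sum_{k\in S}\mathbf{B}[k,i]\,\mathbf{A}[k,j]$ is a sum of only $t$ products, hence computable in $O(t)$, giving the same $O(tpq)$ total.
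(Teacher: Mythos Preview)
Your proof is correct and essentially matches the paper's. In fact you have given both of the paper's formulations: your outer-product decomposition over the $t$ non-zero rows is precisely what the paper's Algorithm~\ref{algo:complexity_A} implements (outer loop over non-zero row indices $l$, inner $q\times p$ accumulation), while your entrywise cross-check is the argument used in the paper's prose proof. Your explicit remark about needing direct access to the non-zero row indices is the same assumption the paper makes implicitly via the line ``\emph{for each $l$ with $\mathbf{B}[l]$ non-zero}'' in the pseudo-code.
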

\begin{lemma}
    \label{lemma:complexity_B}
    Let $\mathbf{B}\in \mathbb{R}^{n\times q}, \mathbf{C} \in \mathbb{R}^{q \times p}$ be arbitrary matrices with $\mathbf{B}$ has $t$ non-zero rows, the time complexity to calculate $\mathbf{BC}$ is $O(tpq)$. 
\end{lemma}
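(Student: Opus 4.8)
The plan is to exploit the row structure of the product. Since the $i$-th row of $\mathbf{B}\mathbf{C}$ equals $\mathbf{B}[i]\,\mathbf{C}$, a zero row of $\mathbf{B}$ contributes a zero row to $\mathbf{B}\mathbf{C}$ and requires no arithmetic at all. This is exactly the structural content behind Lemma \ref{lemma:nzr}, which already guarantees that $\mathbf{B}\mathbf{C}$ has at most $t$ non-zero rows, all sitting at the same row indices as the non-zero rows of $\mathbf{B}$. Thus the computation reduces to evaluating only those $t$ rows, and the remaining $n-t$ rows are simply set to zero.

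First I would fix a single non-zero row index $i$ and observe that $(\mathbf{B}\mathbf{C})[i]=\mathbf{B}[i]\,\mathbf{C}$ is a row-vector-times-matrix product, with $\mathbf{B}[i]\in\mathbb{R}^{1\times q}$ and $\mathbf{C}\in\mathbb{R}^{q\times p}$. Its $j$-th entry is the inner product $\sum_{k=1}^{q}\mathbf{B}[i,k]\,\mathbf{C}[k,j]$, costing $O(q)$ scalar multiply-adds; ranging $j$ over the $p$ columns gives $O(pq)$ operations for the whole row.

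Then I would sum over the $t$ non-zero rows. Each costs $O(pq)$, while the remaining rows of $\mathbf{B}\mathbf{C}$ are known to be zero and contribute no arithmetic, so the total work is $O(tpq)$, as claimed. Note the symmetry with Lemma \ref{lemma:complexity_A}: there one skips zero rows on the left factor of $\mathbf{B}^\top\mathbf{A}$, whereas here one skips zero rows when generating the output rows of $\mathbf{B}\mathbf{C}$; the bookkeeping is essentially dual.

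The only point that needs care — and the closest thing to an obstacle — is that skipping the $n-t$ zero rows presupposes their positions are known in advance; a naïve dense multiplication would instead touch all $n$ rows and cost $O(npq)$. In the settings where this lemma is applied (e.g. $\mathbf{B}=\Delta\mathbf{X}$, or the factor $R_{\mathbf{B}}^{-1}\mathbf{B}$ appearing in Theorems \ref{theorem:node} and \ref{theorem:edge}), the matrix $\mathbf{B}$ is maintained with its non-zero rows explicitly indexed, so these indices are available in $O(t)$ time and the stated saving is genuinely realized rather than merely asymptotic.
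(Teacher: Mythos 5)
Your argument is correct and follows the same route as the paper: invoke Lemma \ref{lemma:nzr} to restrict attention to the at most $t$ non-zero output rows, compute each as a row-times-matrix product in $O(pq)$ time, and sum to get $O(tpq)$, exactly as in the paper's Algorithm for Lemma \ref{lemma:complexity_B}. Your added remark about needing the non-zero row indices to be explicitly maintained is a reasonable implementation note but does not change the argument.
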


\textbf{Time Complexity of Orthogonalization (Step 1). }
According to Lemma \ref{lemma:complexity_A}, the time complexity of calculating $\mathbf{B}^\top \mathbf{X}$ (or $\mathbf{C}^\top \mathbf{Y}$) is $O((\Delta m) d^2)$ since $\mathbf{B}$ (and $\mathbf{C}$) has at most $\Delta m$ non-zero rows.
Next, since $\mathbf{P_X}$ and $\Sigma_1^{1/2}$ are $d$-by-$d$ matrices, the complexity of calculating $\mathbf{W_B}$ (or $\mathbf{W_C}$, Line 2, 12, 13 in Algorithm \ref{algo:sp}) is $O((\Delta m)d^2)$. 
At last, the time complexity of calculating $R_\mathbf{B}$(or $R_\mathbf{C}$, Line 3, 14, 15 in Algorithm \ref{algo:sp}) is $O(\Delta m + d)$. 
Thus, the overall time complexity of the Orthogonalization step is $O((\Delta m)d^2)$.

\textbf{Time Complexity of Rediagonalization (Step 2). }
The time complexity of the Rediagonalization step is $O(d^3)$ by directly applying a t-SVD on a $(d+1)$-by-$(d+1)$ matrix (Line 4, 5, 16, 17 in Algorithm \ref{algo:sp}).

\textbf{Time Complexity of Space Rotation (Step 3). }
The time complexity of calculating $\mathbf{F}$ and $\mathbf{G}$ (Line 6, 7, 18, 19 in Algorithm \ref{algo:sp}) is $O(d^3)$.
According to Lemma \ref{lemma:complexity_B}, the time complexity of calculating $\Delta \mathbf{X}$ and $\Delta \mathbf{Y}$ (Line 8, 9, 20, 21 in Algorithm \ref{algo:sp}) is $O((\Delta m) d^2)$.

\begin{figure*}[t]
\setlength{\abovecaptionskip}{0.cm}
\centering 
\includegraphics[height=4.2cm,width=18cm]{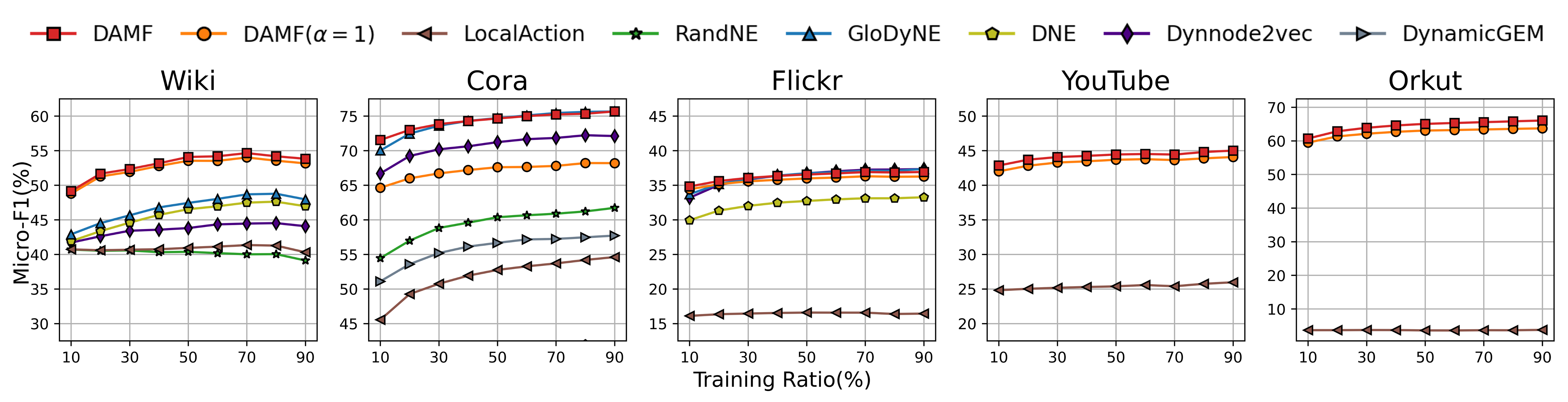}
\caption{Node classification's predictive performance w.r.t. the ratio of training data}
\label{fig:nc}
\vspace{-1em}
\end{figure*}

Moreover, the time complexity of space projection (Line 4, 6, 10 in Algorithm \ref{algo:damf}) is $O((\Delta m)d^2+d^3)$ by the following steps: (1) computing the update of $\mathbf{P_X}$ by a $d$-by-$d$ matrix multiplication; (2) computing the inverse of $\mathbf{P_X}$; (3) computing $\mathbf{BP}^{-1}_\mathbf{X}$. The time complexity of both the matrix multiplication and the inverse is $O(d^3)$, while the time complexity of $\mathbf{BP}^{-1}_\mathbf{X}$ is $O((\Delta m)d^2)$ according to Lemma \ref{lemma:complexity_B}.

\textbf{Time Complexity of Dynamic Embedding Enhancement (Step 4).}
Let $c=\max_{u\in \mathcal{V}}\Vert deg(u) \mathbf{X}(u) \Vert_\infty$ where $\mathbf{X}[u]$ is the node $u$'s context embedding. Since the number of edges changed by the graph structure is $\Delta m$, and at most $\Delta m$ nodes' context embedding that need to be updated in the previous step, according to Theorem 3 and Theorem 4 in \cite{zheng2022instant}, the time complexity of this step is $O(\frac{(\Delta m)cd}{\alpha^2 \epsilon})$.


Overall, if we consider embedding dimension $d$, Personalized PageRank damping factor $\alpha$, error tolerance $\epsilon$ and $c$ as constant, the purposed DAMF algorithm achieves a dynamic network embedding for a single graph change with time complexity $O(\Delta m)$.

\textbf{Time Complexity of Query Node's Embedding}
The node embedding query at any moment can be obtained by multiplying the embedding $\mathbf{X}_b, \mathbf{Y}_b$ of that node in the initial space by the space projection matrix $\mathbf{P}_\mathbf{X}, \mathbf{P}_\mathbf{y}$ with time complexity of $O(d^2)$, while the direct query in the static embedding method has a time complexity of $O(d)$. When we consider dimensionality a constant, the query complexity is $O(1)$.

\begin{table}[t]
\caption{Time Complexity of DAMF}
\begin{tabular}{c|c|c}
\toprule
                                 & \textbf{Name}     & \textbf{Time Complexity}                                     \\ \hline
\textbf{ Step 1 }                  & Orthogonalization  & $O((\Delta m) d^2)$                                          \\ \hline
\textbf{ Step 2 }                  & Rediagonalization & $O(d^3)$                                                     \\ \hline
\textbf{ Step 3 }                  & Space Rotation    & $O((\Delta m) d^2 + d^3)$                                    \\ \hline
\multirow{2}{*}{\textbf{ Step 4 }} & Dynamic Embedding & \multirow{2}{*}{$O(\frac{(\Delta m)cd}{\alpha^2 \epsilon})$} \\
                                 & Enhancement       &                                                              \\ \hline
\end{tabular}
\end{table}

\textbf{Space Complexity.} 
The largest matrices in the DAMF method is an $n$-by-$d$ matrix, so the space complexity is $O(nd)$.
Due to the additional need to store the structure of the graph, the total space complexity is $O(nd + m)$.

\begin{table}[h]
\caption{Statistics of Datasets}
\label{tab:dataset}
\begin{tabular}{cc|c|c|c}
\toprule
\multicolumn{2}{c|}{\textbf{Dataset}}                 & $\mathbf{|\mathcal{V}|}$ & $\mathbf{|\mathcal{E}|}$ & \textbf{\#labels} \\ \hline
\multicolumn{1}{c|}{\multirow{3}{*}{small}} & Wiki    & 4,777                     & 184,812                  & 40                \\ \cline{2-5} 
\multicolumn{1}{c|}{}                       & Cora    & 12,022                   & 45,421                   & 10                \\ \cline{2-5} 
\multicolumn{1}{c|}{}                       & Flickr  & 80,513                   & 11,799,764               & 195               \\ \hline
\multicolumn{1}{c|}{\multirow{2}{*}{large}} & YouTube & 1,138,499                & 2,990,443                & 47                \\ \cline{2-5} 
\multicolumn{1}{c|}{}                       & Orkut   & 3,072,441                & 117,185,083              & 100               \\ \hline
\multicolumn{1}{c|}{massive}                & Twitter & 41,652,230               & \textbf{1,468,365,182}   & -                 \\ \bottomrule
\end{tabular}
\end{table}
\section{Experiment}


In this section, we experimentally compare DAMF to six existing methods on six datasets of varying sizes for three popular analytic tasks: node classification, link prediction, and graph reconstruction. Section \ref{sec:settings} and Section \ref{sec:tasks} introduce the experimental settings and tasks.
In Section \ref{sec:smallg}, Section \ref{sec:largeg}, and Section \ref{sec:massive}, we present our experimental results on small, large, and massive graphs respectively. Finally, we compare the runtime of DAMF with baselines in Section \ref{sec:runtime} and analyze the results of the ablation study on dynamic embedding enhancement in Section \ref{sec:ablation}.

\subsection{Experimental Settings}

\label{sec:settings}
\textbf{Baseline}.
We evaluate DAMF against six existing methods, including LocalAction~\cite{liu2019real}, GloDyNE~\cite{hou2020glodyne}, Dynnode2vec~\cite{mahdavi2018dynnode2vec}, RandNE~\cite{zhang2018billion}, DynGEM~\cite{goyal2018dyngem} and DNE~\cite{du2018dynamic}.

\noindent\textbf{Datasets}. 
We conduct experiments on 6 publicly available graph datasets and divide the datasets into small, large, and massive scales.
As shown in Table \ref{tab:dataset}, small ones include \emph{Wiki}~\cite{wiki}, \emph{Cora}~\cite{hou2020glodyne} and \emph{Flickr}~\cite{flickr}, large ones include \emph{YouTube}~\cite{youtube} and \emph{Orkut}~\cite{orkut}, and \emph{Twitter}~\cite{kwak2010twitter} is considered as a massive dataset.
A short description of each dataset is given in Appendix \ref{appendix:dataset}.

\noindent \textbf{Setting of Dynamicity}.
We follow GloDyNE's setup for the real dataset \emph{Cora}~\cite{hou2020glodyne}, a widely accepted dataset whose dynamic graph has 11 snapshots.
For other datasets, following the setting in LocalAction~\cite{liu2019real} and DNE~\cite{du2018dynamic}, we start with a small number of nodes and gradually add the remaining ones to the graph individually (streaming scenario); for the discrete methods GloDyNE~\cite{hou2020glodyne}, dynGEM~\cite{goyal2018dyngem}, Dynnode2vec~\cite{mahdavi2018dynnode2vec}, the continuous streaming modifications are split into 100 discrete modifications.
To reflect the reality of the recent exponential rise in social network users, we utilize a considerably lower initial setting of 1000 nodes compared with the methods above.

\noindent \textbf{Time Limits}.
For small and large datasets, methods that have not produced results for more than $3$ days ($72$ hours) will not be included in the results. For massive datasets, methods that have not produced results for more than $7$ days ($168$ hours) will not be included in the results. 
 
\noindent \textbf{Paramater Settings}. 
The embedding dimension $d$ of all methods is set to $128$ for a fair comparison. 
For DAMF, the damping factor $\alpha$ for PPR is set to $0.3$ (except for the node classification task on \emph{Cora} where it is set to $0.03$) and the error tolerance $\epsilon$ is set to $10^{-5}$.


\subsection{Experimental Tasks}
\label{sec:tasks}
Dynamic network embedding is tested on three tasks: node classification, link prediction and graph reconstruction.
 
\textbf{Node Classification} is a common model training task to obtain the labels based on the embedding of each node by training a simple classifier.
Following previous work~\cite{tsitsulin2018verse, yang2020nrp}, we randomly select a subset of nodes to train a one-vs-all logistic regression classifier, and then use the classifier to predict the labels of the remaining nodes.
For each node $v$, we first concatenate the normalized context and content vectors as the feature representation of $v$ and then feed it to the classifier.

\textbf{Link Prediction} is the method of predicting the possibility of a link between two graph nodes.
Based on earlier research, we first generate the modified graph $\mathcal{G'}$ by removing 30\% of randomly selected edges from the input graph $\mathcal{G}$, and then construct embeddings on $\mathcal{G'}$.
Then, we build the testing set $\mathcal{E}_{test}$ by selecting the node pairs connected by the removed edges and an equal number of unconnected node pairs in $\mathcal{G}$. 

The inner product of node's embedding are used to make predictions.
Results are evaluated by \emph{Area Under Curve (AUC)} and \emph{Average Precision(AP)}.  

\textbf{Graph Reconstruction} is a basic objective of network embedding.
According to earlier work, we begin by selecting a set $\mathcal{S}$ of node pairs from the input graph $\mathcal{G}$. Afterward, we apply the same method used in link prediction to generate a score for each pair. Then, we calculate the \emph{precision@K}, the proportion of the top-K node pairings that match edges in $\mathcal{G}$.
 
\begin{table}[h]
\captionsetup{justification=centering}
\setlength{\abovecaptionskip}{0.2cm}
\caption{Link prediction results on small graphs\\
TLE: Time Limit Exceeded, $\times $: No Legal Output
}\
\setlength{\tabcolsep}{0.7mm}
\label{tab:lp}
\begin{tabular}{@{}c|cccccc@{}}
\toprule
\multirow{2}{*}{\diagbox{\textbf{Method}}{\textbf{Dataset}}} & \multicolumn{3}{c|}{\textbf{AUC}}                    & \multicolumn{3}{c}{\textbf{AP}}                              \\ \cmidrule(l){2-7} 
                           & Wiki            & Cora            &\multicolumn{1}{c|}{Flickr}          & Wiki            & Cora                     & Flickr          \\ \midrule
LocalAction                & 0.5083          & 0.5562          & \multicolumn{1}{c|}{0.4995}         & 0.5812          & 0.5923                   & 0.5833          \\
GloDyNE                    & 0.6386          & \textbf{0.9296}          & \multicolumn{1}{c|}{0.8511}               & 0.6844          & \textbf{0.9383}          & 0.8639               \\
Dynnode2vec                & 0.5904          & 0.8242          & \multicolumn{1}{c|}{0.8187}          & 0.6639          & 0.8975                   & 0.8476          \\
RandNE                     & 0.1092          & 0.8324          & \multicolumn{1}{c|}{TLE}               & 0.3208          & 0.8435                   & TLE               \\
DynGEM                 & $\times$               & 0.8340          & \multicolumn{1}{c|}{TLE}               & $\times$               & 0.8568                   & TLE               \\
DNE                        & 0.1845          & 0.8407          & \multicolumn{1}{c|}{0.5361}          & 0.3391          & 0.8662                   & 0.5287          \\ \midrule
DAMF($\alpha = 1$)                       & 0.6618          & 0.8555          & \multicolumn{1}{c|}{0.9269}          & 0.7541          & 0.8853                   & 0.9474          \\
DAMF         & \textbf{0.7543} & 0.9010 & \multicolumn{1}{c|}{\textbf{0.9550}} & \textbf{0.7840} & 0.9168 & \textbf{0.9615} \\ \bottomrule
\end{tabular}
\end{table}

\subsection{Experiment}
\textbf{Experiment on Small Graphs.}
\label{sec:smallg}
We conduct experiments on small graphs \emph{Wiki}, \emph{Cora} and \emph{Flickr} with all three tasks of Node Classification, Link Prediction, and Graph Reconstruction.

\begin{figure}[b]
\centering 
\setlength{\abovecaptionskip}{0.cm}
\includegraphics[height=3.1cm,width=8.5cm]{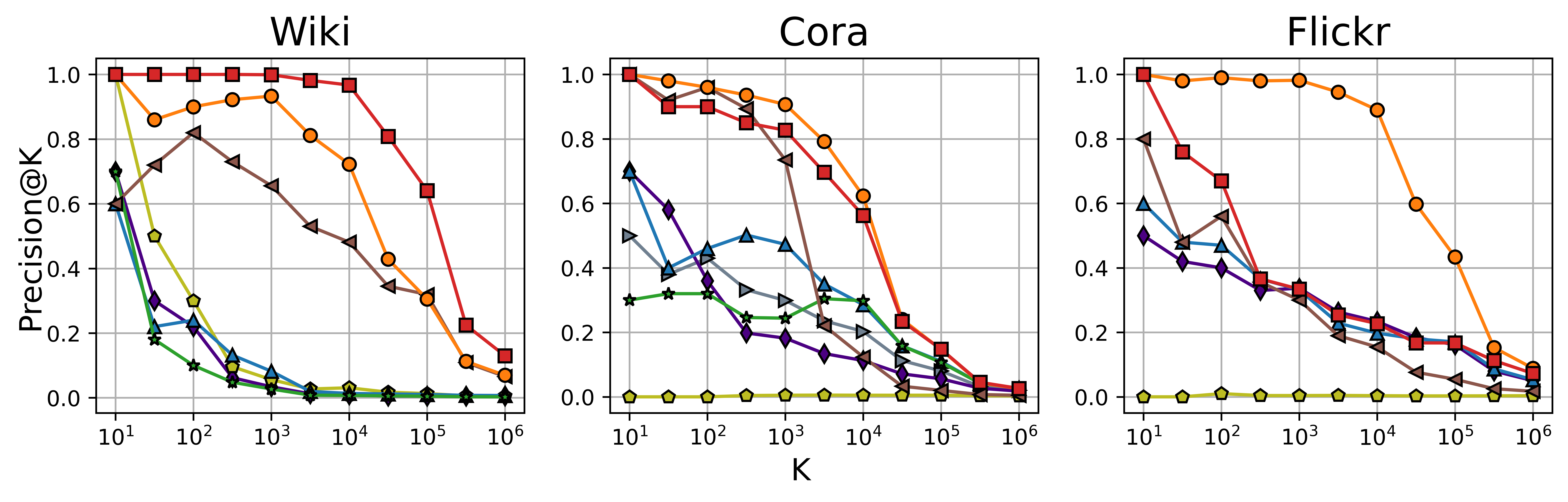}
\caption{Graph Reconstruction (\emph{precision@K})}
\label{fig:gr}
\end{figure}

The results of node classification, link prediction and graph reconstruction are displayed in Figure \ref{fig:nc}, Table \ref{tab:lp} and Figure \ref{fig:gr} respectively. 
On the node classification task, DAMF outperforms all other competitors significantly at \emph{Wiki}, is comparable to other methods on other datasets, and had better performance when the ratio of training data was smaller.
For link prediction, we can see that DAMF outperforms the baselines on almost all datasets based on both \emph{AUC} and \emph{AP} values, except on the \emph{Cora} dataset where it is slightly inferior compared to GloDyNE. In terms of the result of graph reconstruction, DAMF is significantly better than the other baselines. 
Overall, the DAMF method performs well and is stable across different data sets.


\begin{table}[h]
\setlength{\abovecaptionskip}{0.cm}
\caption{Link prediction results on large and massive graphs}
\label{tab:llp}
\setlength{\tabcolsep}{0.35mm}
\begin{tabular}{@{}c|cccccc@{}}
\toprule
\multirow{2}{*}{\diagbox{\textbf{Method}}{\textbf{Dataset}}} & \multicolumn{3}{c|}{\textbf{AUC}}                                                  & \multicolumn{3}{c}{\textbf{AP}}                              \\ \cmidrule(l){2-7} 
                        & Youtube         & Orkut           & \multicolumn{1}{c|}{Twitter}         & Youtube         & Orkut           & Twitter         \\ \midrule
LocalAction             & 0.4849          & 0.4978          & \multicolumn{1}{c|}{0.5006}          & 0.5548          & 0.5355          & 0.5923          \\ \midrule
DAMF($\alpha=1$)        & 0.7648          & 0.8662          & \multicolumn{1}{c|}{0.8732}                & 0.8290          & 0.8828          &    0.9018             \\
DAMF                    & \textbf{0.7946} & \textbf{0.8724} & \multicolumn{1}{c|}{\textbf{0.9055}} & \textbf{0.8510} & \textbf{0.8882} & \textbf{0.9353} \\ \bottomrule
\end{tabular}
\end{table}
 
\textbf{Experiment on Large Graphs.}
\label{sec:largeg}
Our extensive graph includes \emph{YouTube} and \emph{Orkut} datasets. Unlike small graphs, we only conduct Node Classification and Link Prediction. Due to the high number of potential $\binom{n}{2}$ pairs, we discard the Graph Reconstruction task. It is worth noting that only LocalAction reaches the set time limit among all competitors.

The results of node classification presented in Figure \ref{fig:nc} indicate that DAMF performs significantly better in comparison to LocalAction on both large graphs.
The \emph{AUC} and \emph{AP} results for the link prediction task are shown in Table \ref{tab:llp}. In both indicators, DAMF has a significant advantage over the only competitor that can complete the work within the allotted time on large graphs.

\textbf{Experiment on the Massive Graphs.}
\label{sec:massive}
Massive graphs with billion-level edges such as large-scale social networks are widely available in the industry.
However, to the best of our knowledge, dynamic network embedding studies on massive graphs with billion-level edges are unprecedented.
We conduct the first dynamic network embedding experiments on a billion-level edge dataset \emph{Twitter} with 41 million nodes and \textbf{1.4 billion} edges, and mapping each node as a $128$-dimensional vector which is more than 10 times as many learnable parameters as BERT-Large~\cite{kenton2019bert}.

We conduct a link prediction task to determine the nodes' connection probability. Notably, we reduce the ratio of edges deleted in the first stage from 30\% to 0.5\% to ensure graph connectivity when there are only a few nodes.
We discard node classification and graph reconstruction tasks since the Twitter dataset lacks labels to build a classifier and the excessive number of potential pairings makes reconstruction unattainable.

Table \ref{tab:llp} shows the \emph{AUC} and \emph{AP} values. Overall, DAMF performs well on both evaluation indicators, obtaining an \emph{AUC} value of $0.9055$ and an \emph{AP} of $0.9353$. Moreover, DAMF's overall updating time of $\textbf{110}$ hours demonstrates that DAMF is capable of changing parameters at the billion level in under $10$ milliseconds.
 
All baseline methods except LocalAction~\cite{liu2019real} exceed the set time limit (7 days) and are therefore excluded. However, LocalAction seems to fail to converge on this dataset since its \emph{AUC} on link prediction is only $0.5006$.
 
\subsection{Efficency Study}
\label{sec:runtime}
\begin{table}[t]
\setlength{\abovecaptionskip}{0.cm}
\caption{Running time}
\setlength{\abovecaptionskip}{0.cm}
\setlength{\tabcolsep}{0.8mm}
\label{tab:run}
\begin{tabular}{@{}c|ccc|cc@{}}
\toprule
\multirow{2}{*}{\textbf{\diagbox{Method}{Size}}} & \multicolumn{3}{c|}{\textbf{Small}}                                       & \multicolumn{2}{c}{\textbf{Large}}                 \\ \cmidrule(l){2-6} 
                               & Wiki & Cora           & Flickr              & Youtube             & Orkut               \\ \midrule
LocalAction                    & \textbf{7s}              &  \textbf{6s}      &   \textbf{5m28s}    &   \textbf{5m10s} & \textbf{1h43m}       \\ 
GloDyNE                        & \textbf{10m45s}           & \textbf{1m5s}  & \textbf{18h38m}     & \textgreater{}3days & \textgreater{}3days \\ 
Dynnode2vec                    & \textbf{1m58s}            & \textbf{1m40s} & \textbf{5h10m}      & \textgreater{}3days & \textgreater{}3days \\ 
RandNE                         & \textbf{3m}               &  \textbf{5s} & \textgreater{}3days & \textgreater{}3days & \textgreater{}3days \\ 
DynGEM                     & \textbf{17h4m}            & \textbf{6h20m} & \textgreater{}3days & \textgreater{}3days & \textgreater{}3days \\ 
DNE                            & \textbf{22m38s}           & \textbf{8m57s}  & \textbf{8h11m}      & \textgreater{}3days & \textgreater{}3days \\ \midrule
DAMF($\alpha=1$)               & \textbf{36s}              & \textbf{1m19s} & \textbf{33m13s}     & \textbf{3h10m}      & \textbf{8h15m}      \\ 
DAMF                           & \textbf{2m47s}            & \textbf{2m6s}  & \textbf{1h33m}      & \textbf{3h39m}      & \textbf{14h7m}      \\ 
\bottomrule
\end{tabular}
\end{table}

Table \ref{tab:run} shows the running time of each method on small and large datasets. The experimental results show that the speed of DAMF increases more consistently as the size of the dataset expands. According to Table \ref{tab:lp} and Table \ref{tab:llp}, although LocalAction is faster, its \emph{AUC} is only slightly greater than 0.5, indicating that LocalAction makes unbearable tread-offs to obtain speed, whereas DAMF maintains stable effectiveness.

\subsection{Ablation Study}
\label{sec:ablation}

To investigate whether the dynamic embedding enhancement better captures the information of higher-order neighbors and thus improves the quality of dynamic network embedding, we obtain unenhanced embeddings by setting the PageRank damping factor $\alpha$ to 1.
The experimental results in Figure \ref{fig:nc} ,Table \ref{tab:lp}, Table \ref{tab:llp} and Figure \ref{fig:gr} show that the enhanced embeddings are significantly better in node classification, link prediction and graph reconstruction on \emph{Wiki}, demonstrating the effectiveness of the dynamic embedding enhancement.

\section{Conclusion}
In this work, we propose the Dynamic Adjacency Matrix Factorization (DAMF) algorithm, which utilizes a projection onto the embedding space and modifies a few node embeddings to achieve an efficient adjacency matrix decomposition method.
In addition, we use dynamic Personalized PageRank to enhance the embedding to capture high-order neighbors' information dynamically. 
Experimental results show that our proposed method performs well in node classification, link prediction, and graph reconstruction, achieving an average dynamic network embedding update time of 10ms on billion-edge graphs.

\noindent \textbf{Acknowledgement.} This work is supported by NSFC (No.62176233), 
National Key Research and Development Project of China 

\noindent (No.2018AAA0101900) and Fundamental Research Funds for the Central Universities.

\bibliographystyle{ACM-Reference-Format}
\balance
\bibliography{sample-base}

\appendix
\section*{Appendix}

\section{Initialization of DAMF}

\begin{algorithm}[h] 
    \caption{Initialization of DAMF}
    \label{algo:init}
    \LinesNumbered 
    \KwIn{A graph $\mathcal{G}$ with adjacency matrix $\mathbf{A}$, embedding dimension $d$, PageRank damping factor $\alpha$, error tolerance $\epsilon$.}
    \KwOut{$\mathbf{X}_b, \mathbf{Y}_b, \mathbf{Z}_b, \mathbf{P_X}, \mathbf{P_Y}, \mathbf{r}$}
    $\mathbf{U}, \mathbf{\Sigma}, \mathbf{V} \gets \textbf{\texttt{t-SVD}}(\mathbf{A}, d)$\;
    $\mathbf{X}_b \gets \mathbf{U} \mathbf{\Sigma}^{1/2}, \quad \mathbf{Y}_b \gets \mathbf{V} \mathbf{\Sigma}^{1/2}$\;
    $\mathbf{P_X}\gets \mathbf{I},\quad \mathbf{P_Y} \gets \mathbf{I}$\;
    $\mathbf{r} \gets \mathbf{X}_b, \quad \mathbf{Z}_b \gets \mathbf{O}$\;
    $\mathbf{Z}_b, \mathbf{r} \gets \texttt{Propagation}(\mathbf{Z}_b, \mathbf{r}, \mathcal{G}, \alpha, \epsilon)$\;
    \KwRet{$\mathbf{X}_b, \mathbf{Y}_b, \mathbf{Z}_b, \mathbf{P_X}, \mathbf{P_Y}, \mathbf{r}$}\;
\end{algorithm}

Algorithm \ref{algo:init} gives a detailed pseudo-code for the initialization of the DAMF. The random projection-based truncated SVD algorithm \cite{halko2011randomsvd} is able to complete the t-SVD in $O(nd^2+md)$ time, while the time complexity of the PPR enhancement that follows the initialization is O($\frac{ncd}{\alpha \epsilon}$)~\cite{zheng2022instant}. Overall, the time complexity of the initialization step is $O(nd^2 + md + \frac{ncd}{\alpha \epsilon})$.

\section{Proof}
\subsection{Proof of Lemma \ref{lemma:nzr}}
\begin{proof}[Proof of Lemma \ref{lemma:nzr}]
The $i$-th row of the result matrix of the matrix multiplication of $\mathbf{BC}$ can be considered as the $i$-th row of $\mathbf{B}$ multiplied by the matrix $\mathbf{C}$. 
Therefore, if the $i$-th row of $\mathbf{B}$ is all-zero, the $i$-th row of the result matrix will also be all-zero. 
Since $\mathbf{B}$ has only t non-zero rows, $\mathbf{BC}$ has at most $t$ non-zero rows.
\end{proof}

\subsection{Proof of Proposition \ref{proposition:fast}}
\begin{proof}[Proof of Proposition \ref{proposition:fast}]
\begin{equation}
\begin{aligned}
\left \|  (\mathbf{I}-\mathbf{UU}^\top)\vec{x} \right \| _{2}
&=\sqrt{((\mathbf{I}-\mathbf{UU}^\top)\vec{x})^\top (\mathbf{I}-\mathbf{UU}^\top)\vec{x}} \\
&=\sqrt{\vec{x}^\top (\mathbf{I}-\mathbf{UU}^\top)^\top (\mathbf{I}-\mathbf{UU}^\top) \vec{x}} \\
&=\sqrt{\vec{x}^\top (\mathbf{I}-2\mathbf{UU}^\top + \mathbf{UU}^\top\mathbf{UU}^\top) \vec{x}} \\
&=\sqrt{\vec{x}^\top (\mathbf{I}-\mathbf{UU}^\top) \vec{x}} \\
&=\sqrt{\vec{x}^\top \vec{x} - \vec{x}^\top \mathbf{UU}^\top \vec{x}} \\
&=\sqrt{\Vert \vec{x} \Vert_2^2 - \Vert \mathbf{U}^\top\vec{x} \Vert_2^2} \\
\end{aligned}
\end{equation}
\end{proof}

\subsection{Proof of Lemma \ref{lemma:complexity_A}}
\label{appendix:complexity_A}
\begin{proof}

When performing the matrix multiplication of $\mathbf{B}^\top \mathbf{A}$, the element on the $i$-th row and $j$-th column of the result matrix can be obtained by the product of the $i$-th row of $\mathbf{B}^\top$ and the $j$-th row of $\mathbf{A}$ with
\begin{equation}
    (\mathbf{B}^\top \mathbf{A})[i, j] = \mathbf{B}^\top [i] \cdot \mathbf{A}[:,j]
\end{equation}

Because $\mathbf{B}$ has at most $t$ non-zero rows, $\mathbf{B}^\top[i]$ has at most $t$ non-zero elements. 
By skipping zeros in calculating the product, the above equation can be calculated with the time complexity of $O(t)$.
And since $\mathbf{B}^\top \mathbf{A}$ is a $q$-by-$p$ matrix, the time complexity of calculating $\mathbf{B}^\top \mathbf{A}$ is $O(tpq)$.

Algorithm \ref{algo:complexity_A} is a pseudo-code for the above scheme.

\end{proof}

\begin{algorithm}[hb] 
    \caption{Algorithm for Lemma \ref{lemma:complexity_A}}
    \label{algo:complexity_A}
    \LinesNumbered 
    \KwIn{$\mathbf{A}\in \mathbb{R}^{n\times p}, \mathbf{B} \in \mathbb{R}^{n \times q}$ and $\mathbf{B}$ has $t$ non-zero rows.}
    \KwOut{$\mathbf{B}^\top \mathbf{A}$}
    $\mathbf{C} \gets \mathbf{O}_{q\times p}$\;
    \ForEach{l with $\mathbf{B}[l]$ is non-zero}{
        \For{$i\gets 1$ to $q$}{
            \For{$j\gets 1$ to $p$}{
                $\mathbf{C}[i,j] \gets \mathbf{C}[i,j] + \mathbf{A}[l,j] \times \mathbf{B}[l, i]$\;
            }
        }
    }
    \KwRet{$\mathbf{C}$}
\end{algorithm}

\subsection{Proof of Lemma \ref{lemma:complexity_B}}
\label{appendix:complexity_B}
\begin{proof}
From Lemma \ref{lemma:nzr}, there are only at most $t$ non-zero rows in the result of $\mathbf{BC}$. 
So, by skipping the calculation of all-zero rows, the time complexity of of calculating $\mathbf{BC}$ is $O(tpq)$.

Algorithm \ref{algo:complexity_B} is a pseudo-code for the above scheme.
\end{proof}
\begin{algorithm}[hb] 
    \caption{Algorithm for Lemma \ref{lemma:complexity_B}}
    \label{algo:complexity_B}
    \LinesNumbered 
    \KwIn{$\mathbf{B} \in \mathbb{R}^{n \times q}$ and $\mathbf{B}$ has $t$ non-zero rows, $\mathbf{C} \in \mathbb{R}^{q \times p}$}
    \KwOut{$\mathbf{BC}$}
    $\mathbf{D} \gets \mathbf{O}_{n\times p}$\;
    \ForEach{l with $\mathbf{B}[l]$ is non-zero}{
        \For{$i\gets 1$ to $q$}{
            \For{$j\gets 1$ to $p$}{
                $\mathbf{D}[l,j] \gets \mathbf{D}[l,j] + \mathbf{B}[l,i] \times \mathbf{C}[i, j]$\;
            }
        }
    }
    \KwRet{$\mathbf{D}$}
\end{algorithm}

\section{Detailed Experimental Settings}

All experiments are conducted using 8 threads on a Linux machine powered by an AMD Epyc 7H12@3.2GHz and 768GB RAM.
For baselines that require a GPU, we used an additional NVIDIA GeForce RTX 3090 with 24G memory.
The experimental results for each task are the average of the results of five experiments.

\subsection{Additional Details of Link Prediction}

The objective of link prediction is, for the directed node pair $(u, v)$, to predict whether there is a directed edge from $u$ to $v$.
For each pair of nodes $(u, v)$ in the test set, we determine a score by taking the inner product of $u$'s context vector and $v$'s content vector.
For methods that do not distinguish between context embedding and content embedding, we set both their context embedding and content embedding to be node embeddings.
For undirected graphs, we calculate link prediction scores for both directions separately, and then select the larger value as the score.

\subsection{Additional Details of Graph Reconstruction}
For \emph{Wiki} and \emph{Cora} datasets, we define set $\mathcal{S}$ as the collection of all conceivable node pairs. On \emph{Flickr}, we construct $\mathcal{S}$ by taking a 1\% sample of all possible pairs of nodes. Figure \ref{fig:gr} depicts the performance of all methods in the graph reconstruction task with $K$ values ranging from $10^{1}$ to $10^{6}$. DAMF performs better than its competitors on all the datasets and for nearly every $K$ value. The remarkable accuracy of DAMF is especially noticeable on the \emph{Wiki} and \emph{Cora} datasets as $K$ increases.

\subsection{Dataset}

\label{appendix:dataset}
\noindent 

\textbf{Wiki~\cite{wiki}} is a hyperlinked network of Wikipedia. Each node in the directed graph represents a page, and the edges represent hyperlinks.

\textbf{Cora~\cite{hou2020glodyne}} is a dynamic citation undirected network where each node represents a paper, the edges represent citations, and each article is labeled with the domain to which the article belongs.

\textbf{Flickr~\cite{flickr}} is an undirected network of user links on Flickr, where nodes represent users and labels are groups of interest to the user.

\textbf{YouTube~\cite{youtube}} is a video-sharing website where users can upload and watch videos, and share, comment and rate them. Each user is labeled with their favorite video genre.

\textbf{Orkut~\cite{orkut}} is an undirected social network where nodes represent users and edges represent user interactions. Users are labeled with the community they are in.

\textbf{Twitter~\cite{kwak2010twitter}} is a massive-scale directed social network where nodes represent users and edges represent the following relationship. 

\subsection{Code}
We use the following code as our baseline:

Dynnode2vec~\cite{mahdavi2018dynnode2vec}: https://github.com/pedugnat/dynnode2vec

GloDyNE~\cite{hou2020glodyne}: https://github.com/houchengbin/GloDyNE

DynGEM~\cite{goyal2018dyngem}: https://github.com/palash1992/DynamicGEM

DNE~\cite{du2018dynamic}: https://github.com/lundu28/DynamicNetworkEmbedding

RandNE~\cite{zhang2018billion}: https://github.com/ZW-ZHANG/RandNE

As we could not find the code for LocalAction~\cite{liu2019real}, we re-implemented the pseudo-code from the paper in Python and made our implementation available along with our experimental code.

Our code for experiments is available on:

https://github.com/zjunet/DAMF


\end{document}